\documentclass[a4paper]{amsart}\usepackage[a4paper,margin=2.5cm,includefoot]{geometry}
\usepackage[utf8]{inputenc}
\usepackage[english]{babel}
\usepackage[T1]{fontenc}

\usepackage{amsmath,amssymb,mathtools,bm}
\usepackage{csquotes}
\usepackage{amstext}
\usepackage{amsthm}
\usepackage{bbm}
\usepackage{enumerate}
\usepackage[svgnames]{xcolor}
\usepackage[colorlinks,allcolors=MidnightBlue,citecolor=DarkRed]{hyperref}
\usepackage[nameinlink,capitalise]{cleveref}
\usepackage{braket}
\usepackage{dsfont}
\usepackage{color}
\usepackage{tikz}
\usepackage{pgfplots}

\makeatletter
\def\@seccntformat#1{%
	\protect\textup{\protect\@secnumfont
		\ifnum\pdfstrcmp{subsection}{#1}=0 \bfseries\fi
		\ifnum\pdfstrcmp{subsubsection}{#1}=0 \itshape\fi
		\csname the#1\endcsname
		\protect\@secnumpunct
	}%
}
\renewcommand{\@upn}{}
\DeclareRobustCommand{\crefnosort}[1]{%
	\begingroup\@cref@sortfalse\cref{#1}\endgroup
}
\makeatother

\makeatletter
\AddToHook{cmd/appendix/before}{\def\cref@section@alias{appendix}}
\makeatother

\numberwithin{equation}{section}
\newtheorem{thm}{Theorem}[section]
\newtheorem{lem}[thm]{Lemma}
\AddToHook{env/lem/begin}{\crefalias{thm}{lem}}
\newtheorem{prop}[thm]{Proposition}
\AddToHook{env/prop/begin}{\crefalias{thm}{prop}}
\newtheorem{cor}[thm]{Corollary}
\AddToHook{env/cor/begin}{\crefalias{thm}{cor}}

\theoremstyle{definition}

\AddToHook{env/defn/begin}{\crefalias{thm}{defn}}

\AddToHook{env/hyp/begin}{\crefalias{thm}{hyp}}
\renewcommand*{\thehyp}{\Alph{hyp}}

\theoremstyle{remark}
\newtheorem{rem}[thm]{Remark}
\AddToHook{env/rem/begin}{\crefalias{thm}{rem}}

\AddToHook{env/ex/begin}{\crefalias{thm}{ex}}

\crefname{hyp}{Hypothesis}{Hypotheses}\Crefname{hyp}{Hypothesis}{Hypotheses}
\crefname{lem}{Lemma}{Lemmas}\Crefname{lem}{Lemma}{Lemmas}
\crefname{thm}{Theorem}{Theorems}\Crefname{thm}{Theorem}{Theorems}
\crefname{prop}{Proposition}{Propositions}\Crefname{prop}{Proposition}{Propositions}
\crefname{enumi}{}{}\Crefname{enumi}{}{}
\creflabelformat{enumi}{#2(#1)#3}
\crefname{equation}{}{}\Crefname{equation}{}{}
\crefname{rem}{Remark}{Remarks}\Crefname{rem}{Remark}{Remarks}
\crefname{ex}{Example}{Examples}\Crefname{ex}{Example}{Examples}

\makeatletter
\renewcommand{\@upn}{} 
\makeatother

\usepackage[inline]{enumitem}
\newlist{enumthm}{enumerate}{1} 
\setlist[enumthm]{label=\upshape(\roman*),ref=\thethm\,(\roman*)}  
\crefalias{enumthmi}{thm} 
\newlist{enumcor}{enumerate}{1}
\setlist[enumcor]{label=\upshape(\roman*),ref=\thecor\,(\roman*)}
\crefalias{enumcori}{cor}
\newlist{enumlem}{enumerate}{1}
\setlist[enumlem]{label=\upshape(\roman*),ref=\thelem\,(\roman*)}
\crefalias{enumlemi}{lem}
\newlist{enumprop}{enumerate}{1}
\setlist[enumprop]{label=\upshape(\roman*),ref=\theprop\,(\roman*)}
\crefalias{enumpropi}{prop}
\newlist{enumhyp}{enumerate}{1}
\setlist[enumhyp]{label=\upshape(\roman*),ref=\thehyp\,(\roman*)}
\crefalias{enumhypi}{hyp}
\newlist{enumproof}{enumerate*}{1}
\setlist[enumproof]{label=\upshape(\roman*)}
\newlist{enumdef}{enumerate}{1}
\setlist[enumdef]{label=\upshape(\roman*),ref=\thedefn\,(\roman*)}
\crefalias{enumdefi}{defn}
\newlist{enumrem}{enumerate}{1}
\setlist[enumrem]{label=\upshape(\roman*),ref=\therem\,(\roman*)}
\crefalias{enumremi}{rem}

\makeatletter
\newcounter{subcreftmpcnt} %
\newcommand\romansubformat[1]{(\roman{#1})} 
\def\subcref{\@ifstar\@@subcref\@subcref}
\newcommand\@subcref[2][\romansubformat]{%
	\ifcsname r@#2@cref\endcsname
	\cref@getcounter {#2}{\mylabel}%
	\setcounter{subcreftmpcnt}{\mylabel}%
	\hyperref[#2]{\romansubformat{subcreftmpcnt}}%
	\else ?? \fi}   
\newcommand\@@subcref[2][\romansubformat]{%
	\ifcsname r@#2@cref\endcsname
	\cref@getcounter {#2}{\mylabel}%
	\setcounter{subcreftmpcnt}{\mylabel}%
	\romansubformat{subcreftmpcnt}%
	\else ?? \fi}   
\makeatother

\makeatletter
\DeclareRobustCommand{\crefnosort}[1]{%
	\begingroup\@cref@sortfalse\cref{#1}\endgroup
}
\makeatother

\def\endstepsymbol{$\lozenge$}
\def\endclaimsymbol{$\lozenge$}
\newcounter{proofstep}
\AtBeginEnvironment{proof}{\setcounter{proofstep}{0}}

\crefname{proofstep}{Step}{Steps}
\Crefname{proofstep}{Step}{Steps}
\newcounter{proofclaim}
\AtBeginEnvironment{proof}{\setcounter{proofclaim}{0}}

\crefname{proofclaim}{Claim}{Claims}
\Crefname{proofclaim}{Claim}{Claims}

\newcommand{\cB}{{\mathcal B}}
\newcommand{\cE}{{\mathcal E}}\newcommand{\cF}{{\mathcal F}}

\newcommand{\cM}{{\mathcal M}}
\newcommand{\cQ}{{\mathcal Q}}

\newcommand{\fB}{{\mathfrak B}}

\newcommand{\fV}{{\mathfrak V}}

\newcommand{\fg}{{\mathfrak g}}

\newcommand{\fn}{{\mathfrak n}}
\newcommand{\fq}{{\mathfrak q}}

\newcommand{\BC}{{\mathbb C}}

\newcommand{\BN}{{\mathbb N}}
\newcommand{\BR}{{\mathbb R}}

\usepackage{dsfont} 

\newcommand{\DSE}{{\mathds E}}

\newcommand{\DSP}{{\mathds P}}

\newcommand{\dsone}{{\mathds 1}}

\usepackage{mathrsfs}

\newcommand{\sD}{{\mathscr D}}

\newcommand{\sQ}{{\mathscr Q}}

\newcommand{\sW}{{\mathscr W}}

\newcommand{\sfB}{{\mathsf B}}

\newcommand{\sfS}{{\mathsf S}}\newcommand{\sfU}{{\mathsf U}}
\newcommand{\sfV}{{\mathsf V}}

\newcommand{\sfc}{{\mathsf c}}
\newcommand{\sfd}{{\mathsf d}}\newcommand{\sfe}{{\mathsf e}}\newcommand{\sff}{{\mathsf f}}
\newcommand{\sfg}{{\mathsf g}}
\newcommand{\sfl}{{\mathsf l}}
\newcommand{\sfm}{{\mathsf m}}\newcommand{\sfn}{{\mathsf n}}\newcommand{\sfo}{{\mathsf o}}
\newcommand{\sfr}{{\mathsf r}}
\newcommand{\sfs}{{\mathsf s}}
\newcommand{\sfx}{{\mathsf x}}
\newcommand{\sfy}{{\mathsf y}}\newcommand{\sfz}{{\mathsf z}}

\newcommand{\rme}{{\mathrm e}}

\newcommand{\IN}{\BN}\newcommand{\IR}{\BR}\newcommand{\IC}{\BC}

\newcommand{\PP}{\DSP}\newcommand{\EE}{\DSE}

\newcommand{\eps}{\varepsilon}\newcommand{\ph}{\varphi}

\newcommand{\e}{\rme}\newcommand{\Id}{\dsone} \renewcommand{\d}{\sfd}

\renewcommand{\Im}{\operatorname{Im}}

\DeclareFontFamily{U}{mathx}{\hyphenchar\font45}
\DeclareFontShape{U}{mathx}{m}{n}{
	<5> <6> <7> <8> <9> <10>
	<10.95> <12> <14.4> <17.28> <20.74> <24.88>
	mathx10
}{}
\DeclareSymbolFont{mathx}{U}{mathx}{m}{n}
\DeclareFontSubstitution{U}{mathx}{m}{n}
\DeclareMathAccent{\widecheck}{0}{mathx}{"71}
\DeclareMathAccent{\wideparen}{0}{mathx}{"75}

\newcommand{\chr}{\mathbf 1}

\newcommand{\norm}[1]{\lVert#1\lVert}

\newcommand{\FGamma}{\Gamma}
\newcommand{\FS}{\cF}\newcommand{\dG}{\sfd\FGamma}\newcommand{\ad}{a^\dagger}

\title[Feynman--Kac Formula for the Renormalized Spin Boson Model]{A Feynman--Kac Formula for the Subcritical\\ Ultraviolet-Renormalized Spin Boson Model}
\author{Daniel M. Fr\"ohlich}
\author{Benjamin Hinrichs}
\address{Universit\"at Paderborn, Institut f\"ur Mathematik, Institut f\"ur Photonische Quantensysteme, Warburger Str. 100, 33098 Paderborn, Germany}
\email{danif1@upb.de, benjamin.hinrichs@math.upb.de}

\usepackage{xcolor}\usepackage{cancel}
\usepackage{mathtools}

\DeclareMathOperator{\Span}{span}

\newcommand{\HSB}{H_{\sfS\sfB}}

\begin{document}

\begin{abstract} 
	\noindent
	We prove a Feynman--Kac formula (FKF) for the self-energy renormalized spin boson Hamiltonian,
    describing a two-state quantum system linearly coupled to a bosonic quantum field.
    Similar to recent FKFs for the Fr\"ohlich polaron and the non- and semi-relativistic Nelson models,
    it yields a probabilistic treatment of the spin as a jump process, but treats the field on the usual bosonic Fock space.
	As an application, we prove that the existence of ground states for infrared-regular models persists the removal of an ultraviolet cutoff.
\end{abstract}

\maketitle

\section{Introduction}

The spin boson model is one of the simplest non-trivial models of field-matter interactions,
in which a two-state system (`spin') is linearly coupled to a bosonic quantum field.
Despite its simplicity, it has found its way into various physical applications,
ranging from quantum optics via simple open quantum systems to caricature models of quantum electrodynamics.
Since the references are numerous, we here defer the reader to some reviews given in \cite{Leggettetal.1987,Spohn.1989}
or more recently in \cite[\S~3]{FalconiHinrichsValentinMartin.2026}.

Albeit both the large coupling and large boson momentum regime
have already been of longstanding interest for the spin-boson model
\cite{Leggettetal.1987}, they have especially received recent attention
in the mathematical physics literature. Especially, as one important contribution,
Dam and M\o{}ller \cite{DamMoller.2018b} proved no-go theorems for supercritical
interactions on the Hilbert space of the free model and further gave an account
for both the infrared and the ultraviolet problem. A corresponding wavefunction
renormalization scheme, addressing both singularities simultaneously, has recently
been developed in \cite{FalconiHinrichsValentinMartin.2026}.

\smallskip
In this article, we focus on spin-boson models with up-to-critical interactions,
for which a so-called self-energy ultraviolet renormalization scheme converges.
Explicitly, if $H_\Lambda$ denotes the regularized Hamiltonian with an ultraviolet cutoff
$\Lambda\in[0,\infty)$, then the renormalized Hamiltonian is obtained as the limit $H_\infty=\lim_{\Lambda\to\infty}(H_\Lambda-E_\Lambda)$
for an appropriate self-energy term $E_\Lambda\in\IR$. Such schemes (and the failure to converge for supercritical interactions)
were already studied in \cite{DamMoller.2018b} and explicit constructions of the renormalized Hamiltonian
$H_\infty$ were recently given in \cite{HinrichsLampartValentinMartin.2024,AlvarezLillLonigroValentinMartin.2025}.

Self-energy renormalization schemes possess a long standing history, going back to the Nelson model introduced in \cite{Nelson.1964},
later revisited by many authors; see, e.g., \cite{Ammari.2000,GriesemerWuensch.2018,MatteMoller.2018,LampartSchmidt.2019} and references therein.
One way to construct the renormalized operator hereby uses functional integration methods introduced by means of a Feynman--Kac formula.
This has been first achieved using a Euclidean field approach in \cite{GubinelliHiroshimaLorinczi.2014}
and later using a Fock space-valued Feynman--Kac formula (FKF), as introduced in \cite{GueneysuMatteMoller.2017},
in \cite{MatteMoller.2018}. Other models as the original Nelson model, e.g., a relativistic version and the Fr\"ohlich polaron model,
were studied using this approach in \cite{HinrichsMatte.2022,HinrichsMatte.2023} and \cite{HinrichsMatte.2024}, respectively.

We add to this picture, by proving a Fock space-valued FKF  for the self-energy renormalized spin boson Hamiltonian $H_\infty$.
Explicitly, as usual denoting by $H_\sff$ the free field Hamiltonian and by $a$ and $\ad$ the boson annihilation and creation operators on the bosonic Fock space $\FS$, respectively, our formula takes the form
\begin{align}\label{eq:FKF} e^{-tH_\infty}\psi (x) = \EE^x\Bigg[e^{u_t}e^{\ad(-U_t^+)}e^{-tH_\sff}e^{a(-U_t^-)}\psi(X_t)\Bigg], \quad x=\pm 1,\ \psi\in \IC^2\otimes\FS\cong L^2(\{\pm1\};\FS), \end{align}
where $\EE^x$ is the expectation with respect to a probability measure on the right-continuous spin paths $X:[0,\infty)\to\{-1,+1\}$ with support on the paths satisfying $X_0=x$,
$u_t$ is a real-valued stochastic process and $U_t^\pm$ are stochastic processes taking values in the one-boson Hilbert space.

\smallskip
Denoting by $\Omega_\downarrow$ the ground state of the decoupled model, the FKF \cref{eq:FKF} implies
\[ \braket{\Omega_\downarrow,e^{-tH_\infty}\Omega_\downarrow} = \EE^x[e^{u_t}], \]
a relation which allows to study the ground state regime of $H_\infty$, by solely analyzing the process $u_t$.
For the ultraviolet-regularized model $H_\Lambda$ with $\Lambda<\infty$, $u_t$ takes the well-known form
\begin{align}\label{eq:utreg}  u_t = \int_0^t\int_0^t W_\Lambda(s-r)X_sX_r\sfd s\sfd r - t E_\Lambda, \end{align}
which has been widely used to study the ground state energy \cite{Abdessalam.2011} as well as the existence of a ground state
for the infrared-regular \cite{HirokawaHiroshimaLorinczi.2014} and the infrared-singular model \cite{HaslerHinrichsSiebert.2021b,HaslerHinrichsSiebert.2021c,BetzHinrichsKraftPolzer.2025,HinrichsPolzer.2025},
by employing the observation that this provides a connection between the spin boson model and the
one-dimensional Ising model \cite{EmeryLuther.1974,FannesNachtergaele.1988,SpohnDuemcke.1985,Spohn.1989}.

We explicitly construct the process $u_t$ without an ultraviolet cutoff and show that for infrared-regular interactions
a unique ground state exists independent of the presence of the UV cutoff. In this spirit, we provide a first justification
for the heuristic folklore that existence of a ground state is an infrared property.

To further investigate this fact,
one would need to study the ground state behavior of the infrared singular model as well. In this case, in presence of an ultraviolet cutoff,
it is known that ground states exist at small coupling, but cease exist at large coupling \cite{Spohn.1989,HaslerHinrichsSiebert.2021a,BetzHinrichsKraftPolzer.2025}.
This is a subtle property corresponding to the phase transition of the long-range one-dimensional Ising model \cite{Dyson.1969,FrohlichSpencer.1982,ImbrieNewman.1988}. 
Since the proof of this phase transition for the spin boson model heavily relies on this correspondence, we strongly believe that
our FKF is an important ingredient to this open problem for the model without an ultraviolet cutoff, but leave this interesting
question for future work. Especially, we emphasize that a complication in this endeavour is due to the fact that our construction of $u_t$
does not take the form \cref{eq:utreg} anymore, whence the interpretation as a one-dimensional Ising model does not directly carry over to the renormalized case.

\subsection*{Structure of the Article}
In \cref{sec:SB}, we give a precise construction of the Hamiltonian $H_\infty$ and state the two main results:
convergence of a self-energy renormalization scheme (\cref{thm:SEren}) and existence of a unique ground state (\cref{thm:GS}).
In \cref{sec:FK}, we introduce the probabilistic ingredients of our approach, i.e., the processes $u_t$ and $U_t^\pm$
and prove the FKF (\cref{thm:FKF}). In \cref{sec:app}, we then study the ground state regime of $H_\infty$, by proving ergodicity (\cref{prop:erg}) and the existence of ground states \cref{thm:GS}.

\subsection*{Acknowledgments}
A FKF of the type \cref{eq:FKF} in presence of an ultraviolet cutoff was already derived in DMF's master thesis, which was supervised by the second author and written at Leipzig University.
In this course, the authors want to thank Marcel Schmidt for introducing us, for refereeing on the thesis and for valuable suggestions which entered this manuscript.
The authors acknowledge support by the Ministry of Culture and Science of the State of North Rhine-Westphalia within the project `PhoQC' (Grant Nr. PROFILNRW-2020-067).
BH was supported by the German Research Foundation (DFG) within the scientific network `A(E)MP -- Appearance of the Effective Mass in Polaron Models´ (Grant No. 569490025). 

\section{The Self-Energy Renormalized Spin Boson Model}
\label{sec:SB}
In this section, we give a precise construction of the self-energy renormalized spin boson Hamiltonian,
starting with a short introduction on Fock space notation (\cref{subsec:Fock}) and
the standard ultraviolet regularized spin boson model (\cref{subsec:regSB}).
We then construct the self-energy renormalized model (\cref{subsec:Hren})
and state our result on the existence of ground states (\cref{subsec:GS}).

\subsection{Fock Space Calculus}
\label{subsec:Fock}

We start, by introducing the Hilbert space of the bosonic field -- the so-called {\em Fock space} $\FS$.
Throughout this paper, we assume the momentum space for the bosons $(\cM,\mu)$ to be a $\sigma$-finite measure space
and thus the single-boson Hilbert space is $L^2(\cM,\mu)$. The bosonic Fock space is then defined as 
\begin{align}
    \label{eq:Fock}
    \FS \coloneqq \IC\oplus\bigoplus_{n=1}^\infty L^2_{\sfs\sfy\sfm}(\cM^n,\mu^{\otimes n}),
\end{align}
where the index $\sfs\sfy\sfm$ indicates symmetry with respect to permutations  in the $n$ variables of the $n$-boson subspace $L^2_{\sfs\sfy\sfm}(\cM^n,\mu^{\otimes n})$.

For our purposes, an important set of test vectors are the so-called {\em exponential vectors} (if normalized also called {\em coherent states})
\begin{align}
    \label{eq:exp}
    \eps(f) \coloneqq 1\oplus\bigoplus_{n=1}^\infty \frac{1}{\sqrt{n!}}f^{\otimes n}, \qquad f\in L^2(\cM,\mu)
\end{align}
where as usual the tensor product of $g,h\in L^2(\cM,\mu)$ is $g\otimes h:(k,r)\mapsto g(k)h(r)\in L^2(\cM^2,\mu^{\otimes 2})$.
Given $D\subset L^2(\cM,\mu)$, we will further write
\begin{align}
    \label{eq:ED}
    \cE(D)\coloneqq \Span \{\eps(f)|f\in D\}.
\end{align}
Let us now introduce the operators on $\FS$ relevant for this article, which we here for convenience define by their action on $\cE(D)$. For more extensive introductions, we refer to \cite{Parthasarathy.1992,Arai.2018}.

Given $f\in L^2(\cM,\mu)$ and a unitary operator $B$ on $L^2(\cM,\mu)$, we define the {\em Weyl operator} $\sW(f)$ and the {\em second quantization operator} $\Gamma(U)$ by
\begin{align}
    \label{eq:Weyl}
    \sW(f)\eps(g) \coloneqq e^{-\frac12\norm{f}^2 - \braket{f,g}}\eps(f+g),\quad \Gamma(U)\eps(f)=\eps(Uf),
\end{align}
which both uniquely extend to unitary operators on $\FS$ and satisfy the relations
\begin{align}
    \sW(f)\sW(g)=e^{-i\Im\braket{f,g}}\sW(f+g) \quad\mbox{and}\quad \Gamma(U)\Gamma(T)=\Gamma(UT).
\end{align}
Further, if $m$ is a selfadjoint operator on $L^2(\cM,\mu)$, then $t\mapsto \Gamma(e^{-itm})$ is a strongly continuous unitary group,
and we denote its selfadjoint generator by $\dG(m)$.
More explicitly, if $m:\cM\to\IR$ is a multiplication operator, then it acts on the $n$-boson subspace $L^2_{\sfs\sfy\sfm}(\cM^n,\mu^{\otimes n})$
as multiplication with $(k_1,\ldots,k_n)\mapsto \sum_{i=1}^n m(k_i)$. Especially, if $m\ge 0$, then also $\dG(m)\ge 0$.

Also for $f\in L^2(\cM,\mu)$, we define the {\em annihilation operator} $a(f)$ as the closure of the operator defined on $\cE(L^2(\cM,\mu))$ by
\begin{align}
    \label{def:ann}
    a(f)\eps(g) \coloneqq \braket{f,g}\eps(g).
\end{align}
Its adjoint $\ad(f)\coloneqq a(f)^*$ is the {\em creation operator} and they satisfy the {\em canonical commutation relations}
\begin{align}
    [a(f),a(g)] = [\ad(f),\ad(g)] = 0,\quad [a(f),\ad(g)]=\braket{f,g},
\end{align}
e.g.,
on $\cE(L^2(\cM,\mu))$.
Finally, we define the {\em field operator}
\begin{align}
    \label{eq:field}
    \ph(f) \coloneqq \overline{a(f)+\ad(f)},
\end{align}
where $\overline{\,\cdot\,}$ as usual denotes the operator closure. It is selfadjoint and satisfies
\begin{align}
    \label{eq:Weylfield}
    \sW(-if) = e^{-i\ph(f)}.
\end{align}
Furthermore, if $m:\cM\to[0,\infty)$ and $f\in L^2(\cM,\mu)$ satisfies $f/\sqrt m\in L^2(\cM,\mu)$, then we have the relative bound
\begin{align}
    \label{eq:simplerel}
    \norm{\ph(f)\psi} \le \eps\norm{f/\sqrt m}\norm{\dG(m)\psi} + \big(\tfrac4\eps\norm{f/\sqrt m} + \norm{f}\big)\norm{\psi},
    \quad\mbox{for all}\ \psi\in\sD(\ph(f))\subset \sD(\dG(m)),\ \eps>0.
\end{align}
The Fock-space part of the Feynman--Kac formula, which is our main result, will encompass terms of the form $e^{-t\dG(m)}e^{a(-f)}$, which yields a densely defined operator with bounded closure. Let us here again define this closure, by its action on exponential vectors
\begin{align}  
    \label{eq:It}
    I_t(m,f)\eps(g)\coloneqq e^{-\braket{f,g}}\eps({e^{-tm}g})
\end{align}
and with adjoint acting as (\cite[Rem.~5.2]{MatteMoller.2018})
\begin{align}
    \label{eq:Itstar}
    I_t(m,f)^*\eps(g)  = \eps(e^{-tm}g-f).
\end{align}
The so-defined operators are bounded for $f\in\sD(m^{-1/2})$, as is proven in \cite{GueneysuMatteMoller.2017} by expanding the exponentials in series.
We here also recall that there exists a universal $C>0$ such that
\begin{align}
    \label{eq:Ftbound}
    \norm{I_t(m,f)} \le C e^{8\|f/\min(1,\sqrt{tm})\|^2}.
\end{align}
\subsection{Regular Spin Boson Model}
\label{subsec:regSB}

We now introduce the spin boson model, starting with its usual Hamiltonian representation on $\IC^2\otimes\FS$.

Therefore, assume that $\omega:\cM\to[0,\infty)$ is the {\em boson dispersion relation} satisfying $\omega>0$ $\mu$-a.e.
Then, for all interactions
\begin{align}
    \label{eq:Vreg}
    v \in \fV_{\sfr\sfe\sfg} \coloneqq \big\{ f\in L^2(\cM,\mu)  \big| f/\sqrt\omega\in L^2(\cM,\mu) \big\},
\end{align}
we can now define the {\em regular spin boson Hamiltonian}
\begin{align}
	\HSB(v) \coloneqq \sigma_\sfz\otimes \Id_\FS + \Id_{\IC^2}\otimes\dG(\omega) + \sigma_\sfx\otimes \ph(v) + \big(1+\norm{\omega^{-1/2}v}^2\big)\Id_{\IC^2\otimes\FS} \qquad \mbox{on}\ \IC^2\otimes \FS.
\end{align}
It is manifestly selfadjoint and lower-bounded, by the Kato--Rellich theorem and the relative bound \cref{eq:simplerel}.
Note that the last term incorporates a self-energy shift convenient for the following presentation, especially when moving to the renormalized model for more singular interactions.

We will now first pass to a unitarily equivalent representation of the spin boson model, in which the interaction is diagonal on the spin space and the free energy of the spin is modeled as the graph Laplacian on $\{\pm 1\}$.
We will thus, for convenience in the probabilistic representation, map elements of $\IC^2\otimes\FS$ to $\FS$-valued functions on $\{\pm1\}$.
\begin{lem}
	\label{lem:unitary}
	We define the unitary $U:\IC^2\otimes \FS \to L^2(\{\pm 1\};\FS)$ by
	\begin{align}
		\begin{pmatrix}a\\b\end{pmatrix}\otimes \psi \mapsto \begin{cases} +1\mapsto  \frac1{\sqrt 2}(b+a)\psi\\-1\mapsto\frac1{\sqrt 2}(b-a)\psi.\end{cases}
	\end{align}
	Then, for all $v\in\fV_{\sfr\sfe\sfg}$, we have
    \begin{align}
        \label{eq:rewriting}
        \big(U\HSB (v) U^*\psi\big)(x) =  \big(\psi(x)-\psi(-x)\big) + \dG(\omega)\psi(x) + x\ph(v)\psi(x) + \norm{\omega^{-1/2}v}^2.
    \end{align}
\end{lem}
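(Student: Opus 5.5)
The proof is a direct computation, so I plan to reduce everything to the $2\times2$ spin part and check it on elementary tensors. Since all operators in $\HSB(v)$ act as tensor products, it suffices to compute $U(\sigma\otimes T)U^*$ for $\sigma\in\{\Id,\sigma_\sfz,\sigma_\sfx\}$ and $T$ an operator on $\FS$. The map $U$ is of the form $V\otimes\Id_\FS$, where $V:\IC^2\to L^2(\{\pm1\})\cong\IC^2$ is given by the matrix $\frac1{\sqrt2}\begin{pmatrix}1&1\\-1&1\end{pmatrix}$, with rows indexed by $x=+1,-1$. This matrix is real orthogonal, so $U$ is unitary and $U^*=V^T\otimes\Id_\FS$. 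Moreover $U(\sigma\otimes T)U^*=(V\sigma V^T)\otimes T$ on the natural domains. Because $V\otimes\Id$ is a bounded bijection with bounded inverse, domains are preserved, $\sD(U\HSB(v)U^*)=U\sD(\HSB(v))$, and no closure issues arise.

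The key steps are then two matrix computations. First, I compute $V\sigma_\sfz V^T$. The convention here must give the hopping operator $\psi(x)\mapsto-\psi(-x)$, i.e.\ $-\sigma_\sfx$ in the $\{\pm1\}$ basis. Checking: $V\sigma_\sfz V^T = \frac12\begin{pmatrix}1&-1\\-1&-1\end{pmatrix}\begin{pmatrix}1&-1\\1&1\end{pmatrix}=\begin{pmatrix}0&-1\\-1&0\end{pmatrix}$, which is indeed $-\sigma_\sfx$. Second, I compute $V\sigma_\sfx V^T$, which should equal $\mathrm{diag}(1,-1)$, i.e.\ multiplication by $x$: $\frac12\begin{pmatrix}1&1\\1&-1\end{pmatrix}\begin{pmatrix}1&-1\\1&1\end{pmatrix}=\begin{pmatrix}1&0\\0&-1\end{pmatrix}$. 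The identity terms are preserved, so $\Id\otimes\dG(\omega)$ becomes $\dG(\omega)$ acting pointwise in $x$, and the constant $1+\norm{\omega^{-1/2}v}^2$ stays a constant.

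Assembling the pieces gives, for $\psi\in U\sD(\HSB(v))$,
\[
\big(U\HSB(v)U^*\psi\big)(x) = -\psi(-x)+\dG(\omega)\psi(x)+x\ph(v)\psi(x)+\psi(x)+\norm{\omega^{-1/2}v}^2\psi(x).
\]
Here the "$1$" from the self-energy shift is combined with $-\psi(-x)$ to produce the graph Laplacian term $\psi(x)-\psi(-x)$. This is exactly \cref{eq:rewriting}, where the last constant is understood as multiplying $\psi(x)$.

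No genuine obstacle is expected. The only care needed is tracking the sign convention of $U$, so that $\sigma_\sfz$ maps to $-\sigma_\sfx$ rather than $+\sigma_\sfx$; this is what makes the shift by $1$ produce a positive graph Laplacian. One should also note that the identity holds on the domain $U\sD(\HSB(v))=L^2(\{\pm1\};\sD(\dG(\omega)))$, by the relative bound \cref{eq:simplerel}.
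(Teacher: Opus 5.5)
Your proof is correct and follows the paper's argument. The paper likewise reduces the claim to the identities $U(\sigma_\sfx\otimes\Id_\FS)U^*\psi(x)=x\psi(x)$ and $\big(U((\Id+\sigma_\sfz)\otimes\Id_\FS)U^*\psi\big)(x)=\psi(x)-\psi(-x)$, leaves them as a "direct calculation", and your matrix computations with $V$ carry that calculation out, with your domain remark and your reading of the constant term as multiplying $\psi(x)$ as accurate additions.
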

\begin{proof}
	The claim follows, by observing that \[\big(U(\sigma_ \sfx\otimes\Id_\FS)U^*\psi\big)(x) = x\psi(x)\]
    and that \[\big(U(\Id_{\IC^2\otimes\FS}+\sigma_\sfz\otimes\Id_\FS)U^*\psi\big)(x) = \psi(x)-\psi(-x),\]
    which in turn is immediate by direct calculation.
\end{proof}

\subsection{UV-Renormalized Spin Boson Model}
\label{subsec:Hren}
We will now use the right hand side of \cref{eq:rewriting} as definition for the Hamiltonian under consideration, which is the renormalized version of the spin boson Hamiltonian defined for interactions from $\fV_{\sfr\sfe\sfg}+\fV_{\sfU\sfV}$ with
the singular interaction space
\begin{align}
    \label{eq:VUV}
    \fV_{\sfU\sfV} \coloneqq \big\{ f:\cM\to\mu\ \mbox{measurable} \big| f/\omega\in L^2(\cM,\mu) \big\}.
\end{align}
The key observation in the definition of the renormalized spin boson model is that
the van Hove Hamiltonian \cite{vanHove.1952} $\dG(\omega) + x\ph(v) + \norm{\omega^{-1/2}v}^2$ with $v\in \fV_{\sfr\sfe\sfg}$ can
be self-energy renormalized as the selfadjoint lower-bounded Hamiltonian \cite{Derezinski.2003}
\begin{align}
    \label{eq:vHren}
    h(v,x) &= \sW(-x v_{\sfU\sfV}/\omega)\big(\dG(\omega)+x\ph(v_{\sfr\sfe\sfg}) + \norm{\omega^{-1/2}v_{\sfr\sfe\sfg}}^2\big)\sW(x v_{\sfU\sfV}/\omega), \qquad x=\pm1,\\
    \label{eq:decomp}
    &\mbox{where}
    \ v = v_{\sfr\sfe\sfg} + v_{\sfU\sfV},\ v_{\sfr\sfe\sfg}\in\fV_{\sfr\sfe\sfg},\ v_{\sfU\sfV}\in\fV_{\sfU\sfV},\ \braket{v_{\sfr\sfe\sfg},v_{\sfU\sfV}/\omega} = 0.
\end{align}
We stress that the right hand side of \cref{eq:vHren} is independent of the concrete decomposition of $v$ in \cref{eq:decomp},
due to the observation that $v_{\sfr\sfe\sfg} + v_{\sfU\sfV} = \tilde v_{\sfr\sfe\sfg} + \tilde v_{\sfU\sfV}$ implies $v_{\sfU\sfV}-\tilde v_{\sfU\sfV}\in\fV_{\sfr\sfe\sfg}$,
the Weyl relations \cref{eq:Weyl} and the transformation behavior of $\dG(\omega)$ and $\ph(v)$ under Weyl operators,
see for example \cite{Parthasarathy.1992,BratteliRobinson.1996} for details or \cite{HinrichsLampartValentinMartin.2024} for an explicit application to the spin boson model.
Furthermore, we stress that a decomposition of the type \cref{eq:decomp} always exists, e.g., by choosing $v_{\sfr\sfe\sfg}=v\chr_{\{\omega\le 1\}}$ and $v_{\sfU\sfV}=v\chr_{\{\omega> 1\}}$.
Finally, let us note that the construction directly implies that
\begin{align}
    \cE_{\sfr\sfe\sfn}(x,v) \coloneqq \sW(-x v_{\sfU\sfV})\cE(\sD(\omega)) = \cE(-x v_{\sfU\sfV} + \sD(\omega))
\end{align}
is a core for $h(v,x)$, once more noting the independence of the right hand side of the concrete decomposition of $v$ in \cref{eq:decomp}.

With this at hand, we can define the self-energy renormalized spin boson Hamiltonian for form factors $v\in \fV_{\sfr\sfe\sfg}+\fV_{\sfU\sfV}$ as
\begin{align}
    \label{eq:Hren}
    \begin{aligned}
        & \sD\big(H_{\sfr\sfe\sfn}(v)\big) = \big\{ \psi\in L^2(\{\pm1\},\FS)\big|\psi(x)\in \sD(h(v,x))=\sW(-x v_{\sfU\sfV}/\omega)\sD(\dG(\omega))\big\},
    \\&
    \big(H_{\sfr\sfe\sfn}(v) \psi\big)(x) \coloneqq \psi(x)-\psi(-x) + h(v,x)\psi(x).
    \end{aligned}
\end{align}
Its selfadjointness easily follows from that of $h(v,x)$
and a suitable core is given by
\begin{align}
    \label{eq:core}
    \bigoplus_{x=\pm 1}\cE_{\sfr\sfe\sfn}(x,v) = \big\{ \psi\in L^2(\{\pm1\},\FS) \big| \psi(x)\in \cE_{\sfr\sfe\sfn}(x,v),\ x=\pm 1 \big\}.
\end{align}
The following result is known, but also follows naturally from our Feynman--Kac representation of $H_{\sfr\sfe\sfn}(v)$.
\begin{thm}
    \label{thm:SEren}
    Let $(v_n)_{n\in\IN}\subset \fV_{\sfr\sfe\sfg}$ and $v\in \fV_{\sfr\sfe\sfg}+\fV_{\sfU\sfV}$ such that $v_n/(\sqrt{\omega}+\omega)\xrightarrow[L^2(\cM,\mu)]{n\to\infty} v/(\sqrt{\omega}+\omega)$.
    Then
    \begin{align*}
        U \HSB(v_n)U^* \xrightarrow{n\to\infty} H_{\sfr\sfe\sfn}(v) \quad\mbox{in the norm resolvent sense.}
    \end{align*}
\end{thm}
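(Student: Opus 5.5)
The paper says this follows naturally from the Feynman--Kac representation. I would instead prove it directly by combining the van Hove structure with a Weyl-dressing argument. Write $v=v_{\sfr\sfe\sfg}+v_{\sfU\sfV}$ as in \cref{eq:decomp}, with $v_{\sfr\sfe\sfg}=v\chr_{\{\omega\le1\}}$ and $v_{\sfU\sfV}=v\chr_{\{\omega>1\}}$. Split $v_n$ the same way. The hypothesis $v_n/(\sqrt\omega+\omega)\to v/(\sqrt\omega+\omega)$ gives two convergences:
\[ v_{n,\sfr\sfe\sfg}/\sqrt\omega\to v_{\sfr\sfe\sfg}/\sqrt\omega, \qquad v_{n,\sfU\sfV}/\omega\to v_{\sfU\sfV}/\omega \quad \text{in } L^2. \]
Here we use that $\sqrt\omega+\omega$ is comparable to $\sqrt\omega$ on $\{\omega\le1\}$ and to $\omega$ on $\{\omega>1\}$. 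By \cref{lem:unitary}, $U\HSB(v_n)U^*$ has exactly the form \cref{eq:Hren}, with fibre operators $h(v_n,x)$, since for regular $v_n$ the renormalized van Hove operator coincides with the unrenormalized one. The graph-Laplacian part $\psi(x)-\psi(-x)$ is a fixed bounded perturbation. By the second resolvent identity it therefore suffices to prove norm resolvent convergence of the block-diagonal operators $\bigoplus_x h(v_n,x)\to\bigoplus_x h(v,x)$, i.e. $h(v_n,x)\to h(v,x)$ in norm resolvent sense for each $x=\pm1$.

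Fix $x$ and set $g_n=x v_{n,\sfU\sfV}/\omega$ and $g=xv_{\sfU\sfV}/\omega$, so that $g_n\to g$ in $L^2$. By \cref{eq:vHren},
\[ (h(v_n,x)+i)^{-1}=\sW(-g_n)\,(K_n+i)^{-1}\,\sW(g_n), \]
where $K_n=\dG(\omega)+x\ph(v_{n,\sfr\sfe\sfg})+\norm{\omega^{-1/2}v_{n,\sfr\sfe\sfg}}^2$, and analogously for $h(v,x)$ with $K$. (One should check the orthogonality condition in \cref{eq:decomp}. It holds for the split by $\{\omega\le1\}$ since the supports are disjoint.) I would then show two things.

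\begin{enumerate}[label=(\alph*)]
\item $K_n\to K$ in norm resolvent sense. By \cref{eq:simplerel}, $\ph(v_{n,\sfr\sfe\sfg}-v_{\sfr\sfe\sfg})$ is $\dG(\omega)$-bounded with relative bound constant proportional to $\norm{(v_{n,\sfr\sfe\sfg}-v_{\sfr\sfe\sfg})/\sqrt\omega}\to0$. Moreover $\dG(\omega)$ is uniformly $K$-bounded by Kato--Rellich. The resolvent identity then gives $\norm{(K_n+i)^{-1}-(K+i)^{-1}}\to0$. The constant shifts converge trivially.
\item $\sW(g_n)\to\sW(g)$ sandwiching a resolvent converges in norm. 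This step is the main obstacle, since Weyl operators only converge strongly, not in norm.
\end{enumerate}

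For (b), I would write
\[ \sW(-g_n)R\,\sW(g_n)-\sW(-g)R\,\sW(g) \]
with $R=(K+i)^{-1}$. Using the Weyl relations, this reduces to estimating $\sW(\pm\delta_n)R-R$, with $\delta_n=g_n-g$, up to a scalar phase $e^{-i\Im\braket{g,g_n}}$ that tends to $1$. The key estimate is
\[ \norm{(\sW(\delta)-1)(\dG(\omega)+1)^{-1/2}}\lesssim \norm{\delta}+\norm{\delta/\sqrt\omega}. \]
This follows from $\sW(\delta)-1=-i\int_0^1 \sW(s\delta)\,\ph(-i\delta)\,\sfd s$ together with the form version of \cref{eq:simplerel}. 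Since $R$ maps into $\sD(\dG(\omega))$ boundedly, this bound suffices.

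However, $\delta_n$ is only known to satisfy $\delta_n\to0$ in $L^2$, with no control of $\delta_n/\sqrt\omega$ or of $\delta_n$ itself on $\{\omega>1\}$ beyond that. So I would refine the decomposition depending on $n$. For $\Lambda$ large, move the part of $v_{\sfU\sfV}$ and $v_{n,\sfU\sfV}$ on $\{1<\omega\le\Lambda\}$ into the regular parts; this is allowed by the decomposition-independence remark. The remaining singular parts live on $\{\omega>\Lambda\}$, where $\norm{\delta}$ and $\norm{\delta/\sqrt\omega}$ are both dominated by $\norm{\delta/\omega}\cdot\Lambda$-independent factors ($\omega\ge\Lambda\ge1$ there). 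They are therefore controlled by $\norm{\delta_n/\omega}\to0$.

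The regular parts on $\{\omega\le\Lambda\}$ converge in $\omega^{-1/2}L^2$, because $\omega/(\sqrt\omega+\omega)$ is bounded there, so (a) still applies. Combining (a) and (b) with the resolvent identity yields norm resolvent convergence of $h(v_n,x)$ for each $x$, and hence the theorem.
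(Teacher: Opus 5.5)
Your argument is correct, but it takes a genuinely different route from the paper.

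\textbf{The paper's route.} The paper never works with resolvents directly. It reads the theorem off \cref{thm:FKF}: $v\mapsto T_t(v)=e^{-tH_{\sfr\sfe\sfn}(v)}$ is operator-norm continuous with respect to $\|v/(\sqrt\omega+\omega)\|$. This follows from dominated convergence in the path integral, using the norm continuity of $v\mapsto W_t(v)(\gamma)$ from \cref{prop:Wt.cont2} and the integrable majorant from \cref{prop:Wt.bound}. Norm convergence of the semigroups at a fixed $t>0$, with the infima of the spectra staying bounded below, then gives norm resolvent convergence.

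\textbf{Your route.} You work fibrewise with the explicit dressed form \cref{eq:vHren}. Kato--Rellich and relative boundedness handle the regular part. The estimate $\|(\sW(\delta)-1)(\dG(\omega)+1)^{-1/2}\|\le 2\|\delta/\sqrt\omega\|+\|\delta\|$ handles the change of the dressing transformation. The Weyl phases in $\sW(g)\sW(-g_n)R\,\sW(g_n)\sW(-g)=\sW(-\delta_n)R\,\sW(\delta_n)$ in fact cancel exactly.

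\textbf{Comparison.} Your approach is purely operator-theoretic and independent of the functional integral. It also yields a local Lipschitz rate in $\|(v_n-v)/(\sqrt\omega+\omega)\|$. The paper's approach costs nothing extra once the FKF is established, and it is the same continuity that later feeds the ergodicity and ground-state arguments.

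\textbf{A simplification.} The obstacle you raise at the end is not actually there. With the split at $\omega=1$, you have $\delta_n=x(v_n-v)\chr_{\{\omega>1\}}/\omega$, so $\|\delta_n\|\le 2\|(v_n-v)/(\sqrt\omega+\omega)\|\to0$. Since $\delta_n$ is supported where $\omega>1$, also $\|\delta_n/\sqrt\omega\|\le\|\delta_n\|\to0$. The key estimate therefore applies directly, and the $\Lambda$-refinement can be dropped. It is harmless for any fixed $\Lambda$; no $n$-dependence is needed.

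\textbf{A small slip.} In the refined version, the reason the regular parts converge in $\omega^{-1/2}L^2$ is that $(\sqrt\omega+\omega)/\sqrt\omega=1+\sqrt\omega$ is bounded on $\{\omega\le\Lambda\}$. The ratio $\omega/(\sqrt\omega+\omega)$, which you cite, is bounded everywhere and is not the relevant one.
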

\begin{proof}
    Noting that $UH_{\sfS\sfB}(v)U^* = H_{\sfr\sfe\sfn}(v)$ for $v\in\fV_{\sfr\sfe\sfg}$, by \cref{lem:unitary},
    the claim follows from the fact that $v\mapsto e^{-t H_{\sfr\sfe\sfn}(v)}$ is continuous in operator norm w.r.t. the norm $\|v/(\sqrt{\omega}+\omega)\|$.
    This is proved in \cref{thm:FKF}.
\end{proof}
\begin{rem}
    Self-energy renormalizations of the spin boson model were derived in \cite{DamMoller.2018a,HinrichsLampartValentinMartin.2024,AlvarezLillLonigroValentinMartin.2025}. Especially, this explicit form of $H_{\sfr\sfe\sfn}(v)$ stems from \cite{HinrichsLampartValentinMartin.2024}. In those articles, the strong/norm resolvent continuity of $H_{\sfr\sfe\sfn}(v)$ as a function of $v$ is also proved. We will obtain the same result from our functional integration representation below.
\end{rem}
\subsection{Existence and Uniqueness of Ground States}
\label{subsec:GS}
Whereas the main result of this article is a Feynman--Kac representation of $H_{\sfr\sfe\sfn}(v)$ for any $v\in \fV_{\sfr\sfe\sfg}+\fV_{\sfU\sfV}$,
which we will derive in \cref{sec:FK}, one important application is the following result on the spectrum.
\begin{thm}
    \label{thm:GS}
    If $v\in\fV_{\sfU\sfV}$, then $H_{\sfr\sfe\sfn}(v)$ has a unique ground state, i.e.,
    \[\dim\ker\Big(H_{\sfr\sfe\sfn}(v)-\inf\sigma\big(H_{\sfr\sfe\sfn}(v)\big)\Big)=1.\]
\end{thm}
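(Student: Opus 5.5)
Uniqueness will come from ergodicity of the semigroup, i.e.\ positivity improvement, and existence from a uniform lower bound on the overlap of the Gibbs state with $\Omega_\downarrow$. Both will be read off from the Feynman--Kac formula \cref{eq:FKF} (\cref{thm:FKF}) for $H_{\sfr\sfe\sfn}(v)$.

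\emph{Uniqueness.} I will work in the Schrödinger (Q-space) representation of $\FS$ attached to the real subspace containing $v$. In it, $e^{-tH_\sff}$ is positivity improving, and the operators $e^{\ad(-U_t^+)}$, $e^{a(-U_t^-)}$ and the scalar $e^{u_t}$ preserve positivity, after a fixed sign change $\psi(x)\mapsto$ (a sign on the $x=-1$ component) that makes the interaction term $x\ph(v)$ attractive. The jump process $X$ reaches both states with positive probability at every time $t>0$. These facts together give the following: $e^{-tH_{\sfr\sfe\sfn}(v)}$ is positivity improving on $L^2(\{\pm1\};L^2(Q))$, which is the content of \cref{prop:erg}. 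Perron--Frobenius--Faris then implies that a ground state, if it exists, is unique and strictly positive.

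\emph{Existence.} Let $\Omega_\downarrow$ be the ground state of the decoupled model, and let $E=\inf\sigma(H_{\sfr\sfe\sfn}(v))$. The standard criterion says a ground state exists if
\[
\liminf_{t\to\infty}\frac{\braket{\Omega_\downarrow,e^{-tH_{\sfr\sfe\sfn}(v)}\Omega_\downarrow}^2}{\braket{\Omega_\downarrow,e^{-2tH_{\sfr\sfe\sfn}(v)}\Omega_\downarrow}}>0 .
\]
The reason is that this ratio equals $\braket{\Omega_\downarrow,\phi_t}^2$ for the normalized vectors $\phi_t=e^{-tH}\Omega_\downarrow/\|e^{-tH}\Omega_\downarrow\|$. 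Any weak limit point of $\phi_t$ lies in $\ker(H-E)$, and the lower bound on the overlap forces that limit to be nonzero. By \cref{eq:FKF} the numerator and denominator are $(\EE^x[e^{u_t}])^2$ and $\EE^x[e^{u_{2t}}]$. I will therefore need two-sided control of $\log\EE^x[e^{u_t}]$ of the form $-tE+O(1)$, uniformly in $t$.

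My plan is to split $u_t=u_t^{\sfr\sfe\sfg}+u_t^{\sfU\sfV}$. Since $v\in\fV_{\sfU\sfV}$, the infrared part with the cutoff $\chr_{\{\omega\le1\}}$ lies in $\fV_{\sfr\sfe\sfg}$ and has $v/\omega\in L^2$. Its pair potential $W(s-r)=\int|v|^2e^{-|s-r|\omega}\,\sfd\mu$ is therefore integrable in $|s-r|$. For such summable long-range Ising-type interactions, one can decouple along the jump times of $X$ and get the needed $O(1)$ corrections, as in \cite{HirokawaHiroshimaLorinczi.2014}.

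The ultraviolet part is the main obstacle, since $u_t$ no longer has the form \cref{eq:utreg}. I expect the explicit construction in \cref{sec:FK} to express $u_t^{\sfU\sfV}$ as a sum over the jumps of $X$ of terms like $\norm{v/\omega}^2$-type contributions and $\braket{v/\omega,e^{-(\cdot)\omega}v/\omega}$ cross terms between jump times. These are bounded uniformly because $v/\omega\in L^2$. The UV part then acts as a jump-dependent bounded perturbation, and by Hölder/Jensen on $e^{u_t}$ the required uniform-in-$t$ ratio bound reduces to the regular case. Alternatively, one can approximate by $v_n=v\chr_{\{\omega\le n\}}$ and establish the overlap bound uniformly in $n$ via \cref{thm:SEren}. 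Both routes rest on the same assumption: the key estimate is that the UV contribution per jump is bounded by a constant depending only on $\norm{v/\omega}$.
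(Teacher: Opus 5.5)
Your uniqueness argument is the paper's (\cref{prop:erg} plus \cref{cor:PFF}). The sign change you mention is unnecessary: $I_{t/2}(\omega,f)$ and its adjoint improve positivity for every real $f$, whatever the sign of $x$.

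The existence part has a genuine gap. You never obtain the uniform-in-$t$ bound on $\EE^x[e^{u_t}]^2/\EE^x[e^{u_{2t}}]$, and the mechanism you propose cannot give it. By \cref{prop:phase.3}, $u_t(v_{\sfU\sfV})=-\gamma_t\braket{U_t^-,v_{\sfU\sfV}/\omega}+\sum_{T_j\le t}(\pm2)\braket{U_{T_j}^-,v_{\sfU\sfV}/\omega}$. A per-jump bound therefore only yields $|u_t(v_{\sfU\sfV})|\le\norm{v/\omega}^2(1+2N_t)$, with $N_t$ the number of jumps. This is a perturbation of order $t$, not $O(1)$. Handling it by Hölder or Jensen changes the exponential rate (e.g.\ $\EE[e^{cN_t}]=e^{t(e^c-1)}$). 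Numerator and denominator are then controlled only up to factors $e^{O(t)}$, and the ratio bound is lost, since it needs their rates to agree up to $O(1)$. Your infrared part has a related weakness. Integrability of $W$ only uses $v/\sqrt\omega\in L^2$, which also holds in the infrared-critical case, where ground states can fail to exist. The operative quantity is the first moment $\int_0^\infty rW(r)\,\sfd r=\norm{v/\omega}^2$.

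The missing idea is to bound only the interaction between the time intervals $[0,t]$ and $[t,2t]$, not $u_t$ itself. You also do not need two-sided asymptotics $-tE+O(1)$, whose lower half is essentially the conclusion. The flow equation \cref{prop:phase.4} gives $u_{2t}=u_t+u_t\circ\tau_t+\braket{U_t^-(v),U_t^+(v)\circ\tau_t}$. Since $v\in\fV_{\sfU\sfV}$, you may take $v_{\sfr\sfe\sfg}=0$ in \cref{prop:U.1}, with no IR/UV splitting. This gives $\norm{U_t^\pm(v)(\gamma)}\le\norm{v/\omega}$ for all $t$ and $\gamma$, so the cross term is at most $\norm{v/\omega}^2$. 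The Markov property and $u_t(\gamma)=u_t(-\gamma)$ then give $\EE^x[e^{u_{2t}}]\le e^{\norm{v/\omega}^2}\EE^x[e^{u_t}]^2$, so the ratio is at least $e^{-\norm{v/\omega}^2}$.

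Your alternative via $v_n=v\chr_{\{\omega\le n\}}$ can be completed, but only with exactly this cutoff-independent cross-term bound for $v_n$, not a per-jump bound. You also need a limiting step that \cref{thm:SEren} does not supply by itself. For example: first $e^{tE_n}\braket{\Omega_\downarrow,e^{-tH_n}\Omega_\downarrow}\ge e^{-\norm{v/\omega}^2}$; then let $n\to\infty$ using norm convergence of the semigroups, which also gives $E_n\to E$; then let $t\to\infty$.
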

\begin{proof}
    We prove the claim in the end of \cref{sec:app}.
\end{proof}
\begin{rem}[Persistence of ground states]
    Under the additional assumption $v\in L^2(\cM,\mu)$, i.e., in the presence of an ultraviolet cutoff,
    the result is well-known; see for example \cite{BachFroehlichSigal.1997,Gerard.2000,HirokawaHiroshimaLorinczi.2014}.
    This is due to the fact, that the assumption $v/\omega \in L^2(\cM,\mu)$ in fact requires infrared regularity of the model.
    Our result thus proves the fact that existence of ground states is indeed an infrared property,
    since it is independent of the ultraviolet cutoff. We emphasize that both the uniqueness and the existence of the ground state
    do not follow from previous constructions of $H_{\sfr\sfe\sfn}(v)$.
\end{rem}
\begin{rem}[The infrared-critical spin boson model]
    If $v\in\fV_{\sfr\sfe\sfg}\setminus\fV_{\sfU\sfV}$,
    then it has been shown that the existence of a ground state
    undergoes a phase transition w.r.t. the coupling strength, i.e.,
    $H_{\sfr\sfe\sfn}(\alpha v)$ has a unique ground state for small $|\alpha|$,
    but none for $|\alpha|$ sufficiently large;
    cf. \cite{Spohn.1989,HaslerHinrichsSiebert.2021a,BetzHinrichsKraftPolzer.2025}.
    We \textbf{conjecture} that this behavior persists, when the ultraviolet cutoff
    is removed. Since the access to ground state existence for the critical ultraviolet regular model
    requires a Feynman--Kac based approach similar to \cref{sec:app}, see \cite{HaslerHinrichsSiebert.2021b,BetzHinrichsKraftPolzer.2025,HinrichsPolzer.2025},
    the construction of the path integral representation without a cutoff in this article
    is an important first step towards the proof of this conjecture.
\end{rem}

\section{Feynman--Kac Formula for the Spin Boson Semigroup}
\label{sec:FK}
In this \lcnamecref{sec:FK}, we derive a Feynman--Kac formula (FKF) for $H_{\sfr\sfe\sfn}(v)$ for any interaction $v\in\fV_{\sfr\sfe\sfg} + \fV_{\sfU\sfV}$.

\subsection{The Spin Process}
Starting point of our investigation is the probabilistic representation of the graph Laplacian on $\{\pm1\}$.
Therefore, let $\Omega$ be the space of right-continuous functions $\gamma:[0,\infty)\to\{\pm1\}$
and equip it with the $\sigma$-algebra $\Sigma$ generated by the pointwise evaluations $t\mapsto \gamma(t),\,t\ge 0$.
Then, let $(X_t)_{t\ge0}$ be the $\{\pm1\}$-valued continuous-time Markov chain on the filtered probability spaces
$(\Omega,\Sigma,\Sigma_t,\PP^x),\,x=\pm1$ with transition function
\begin{align}
    \label{eq:transition}
    p_t(x,y) = \frac12\big(1+e^{-2t}\delta_{x,y} - e^{-2t}\delta_{x,-y}\big),
\end{align}
that is
\begin{align}
    \label{eq:transition2}
    \PP^x[X_{t_1}=x_1,\ldots,X_{t_n}=x_n] = p_{t_1}(x,x_1)p_{t_2-t_1}(x_1,x_2)\cdots p_{t_n-t_{n-1}}(x_{n_1},x_n), \quad 0\le t_1\le\cdots\le x_n
\end{align}
and with the Markov property
\begin{align}
    \label{eq:Markov}
    \EE^x[X_{t+s}|\Sigma_s] = \EE^x[(X\circ \tau_s)_t| \Sigma_s] = \EE^{X_s}[X_t], \qquad\mbox{$\PP^x$-almost surely,}
\end{align}
where we introduced the notation that $\EE^x[\,\cdot\,]$ denotes integrals w.r.t. $\PP^x$ and $\tau_s:\Omega\to\Omega,\gamma\mapsto \gamma(\cdot+s)$,
cf. \cite[\S~2]{Liggett.2010} for more details on the general theory or \cite[\S~4.2]{Hinrichs.2022} for an explicit treatment of this specific process.
\begin{rem}
    In the literature, e.g., \cite{Spohn.1989,Abdessalam.2011,HirokawaHiroshimaLorinczi.2014,HaslerHinrichsSiebert.2021b}, an alternative definition of $X_t$ is used: denoting by $(N_t)_{t\ge 0}$ the Poisson process with intensity 1, one defines $S_t=(-1)^{N_t}$. Our process $X_t$ is then given as the pushforward measure on the path space of $(S_t)_{t\ge0}$.
\end{rem}
The following statement illustrates the generator property of the graph Laplacian for $X_s$.
Therein an from now on, we use Riemann--Stieltjes integrals when integrating w.r.t. functions of $X_t$.
In this way, we provide a fully analytic approach to the definition of integrals used throughout this article.
We do emphasize, however, that alternatively one could refer to stochastic integration w.r.t. the Poisson process.
Note that these are well-defined for any continuous integrand.
\begin{lem}
    \label{lem:stieltjesexp}
    Let $(A_t)_{t\ge 0}$ be a Banach-space valued adapted stochastic process  on $(\Omega,\Sigma,\Sigma_t)$ which has continuous paths $\PP^x$-almost surely and is uniformly bounded in $\Omega$. Then for any $f:\{\pm1\}\to\IC$ and $x\in\{\pm1\}$
    \begin{align*}
        \EE^x\Bigg[\int_0^t A_s \sfd f(X_s)\Bigg] = -\int_0^t \EE^x\big[A_s\big(f(X_s)-f(-X_s\big)\big]\sfd s.
    \end{align*}
\end{lem}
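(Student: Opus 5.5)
The plan is to approximate the Riemann--Stieltjes integral by Riemann--Stieltjes sums, take expectations term by term using the Markov property, and pass to the limit. Since $f(X_s)$ is a step function of bounded variation with finitely many jumps on $[0,t]$ almost surely (at most $N_t<\infty$ jumps), and $A$ is continuous, for a partition $0=s_0<s_1<\dots<s_n=t$ with mesh tending to zero we have
\begin{align*}
\int_0^t A_s\,\sfd f(X_s) = \lim_{n\to\infty}\sum_{k=0}^{n-1} A_{s_k}\big(f(X_{s_{k+1}})-f(X_{s_k})\big)
\end{align*}
almost surely. The sums are bounded by $2\sup\|A\|\,\|f\|_\infty\,(\text{number of jumps of }X\text{ on }[0,t])$, since only intervals containing a jump contribute. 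Because the number of jumps is Poisson distributed (intensity $1$) and hence integrable, dominated convergence (for Bochner integrals) lets us interchange limit and expectation.

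Next I compute each term. Since $A_{s_k}$ is $\Sigma_{s_k}$-measurable and bounded, conditioning on $\Sigma_{s_k}$ and using the Markov property \cref{eq:Markov} (applied to the function $f$ rather than the identity, which is legitimate since any $f$ on $\{\pm1\}$ is affine in $X$) gives
\begin{align*}
\EE^x\big[A_{s_k}(f(X_{s_{k+1}})-f(X_{s_k}))\big]=\EE^x\big[A_{s_k}\,g_{h_k}(X_{s_k})\big],
\end{align*}
where $h_k=s_{k+1}-s_k$ and $g_h(y)=\sum_z p_h(y,z)f(z)-f(y)$. From \cref{eq:transition} we get $g_h(y)=\tfrac12(1-e^{-2h})\big(f(-y)-f(y)\big)$. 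For vector-valued $A$ the conditioning can be handled by pairing with functionals, or by noting $f(X_{s_{k+1}})$ is scalar, so conditional expectation of a scalar times a measurable bounded vector is fine.

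Hence the sum of expectations equals
\begin{align*}
-\sum_k \tfrac{1-e^{-2h_k}}{2}\,\EE^x\big[A_{s_k}(f(X_{s_k})-f(-X_{s_k}))\big].
\end{align*}
Since $\tfrac12(1-e^{-2h})=h+O(h^2)$, this differs from a Riemann sum of $s\mapsto \EE^x[A_s(f(X_s)-f(-X_s))]$ by $O(\text{mesh})$. The integrand is continuous in $s$: $A$ has continuous bounded paths, so dominated convergence gives continuity of $\EE^x[A_s\phi(X_s)]$ after using that $X$ is stochastically continuous ($\PP^x[X_s\ne X_r]\to0$ as $r\to s$). So the Riemann sums converge to $-\int_0^t\EE^x[A_s(f(X_s)-f(-X_s))]\,\sfd s$.

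The main obstacle is the justification of the limit interchange: pathwise convergence of the Stieltjes sums with an integrable dominating function. For this I would use the Poisson representation of the jump count (from the preceding remark) to obtain the integrable bound.
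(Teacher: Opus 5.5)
Your proposal is correct and follows essentially the same route as the paper. You use left-endpoint Riemann--Stieltjes sums, condition each term on $\Sigma_{s_k}$ via the Markov property and the explicit flip probability $\tfrac12(1-e^{-2h})=h+O(h^2)$, and then recognize a convergent Riemann sum. Your bound on the sums by a constant times the Poisson-distributed, hence integrable, number of jumps is a more careful justification of the limit--expectation interchange than the paper's appeal to uniform boundedness of $A$ alone, which does not bound the Stieltjes sums uniformly in $n$.
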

\begin{rem}
    \label{rem:stochint}
    In the language of stochastic differential equation, the above statement could also be inferred from
    \[
        \sfd f(X_s) = \big(f(X_{s_-})-f(-X_{s_-})\big) \sfd N_s,
    \]
    combined with It\^o's product formula.
    We here choose to give a selfcontained proof, solely on the basis of Riemann--Stieltjes integration.
\end{rem}
\begin{proof}
    Let $t_{i,n}\coloneqq t\cdot\frac in$. Then, for all $\gamma\in\Omega$, by definition
    \begin{align*}
        \int_0^t A_s(\gamma) \sfd f(\gamma_s) = \lim_{n\to\infty}\sum_{i=1}^n A_{t_{i-1}}(\gamma) \big(f(\gamma_{t_i}-f(\gamma_{t_{i-1}}\big).
    \end{align*}
    By the uniform boundedness of $A_s$ and the dominated convergence theorem,
    we can exchange integration w.r.t. $\PP^x$ and the limit $n\to\infty$ and obtain
    \begin{align*}
        \EE^x\Bigg[\int_0^t A_s \sfd f(X_s)\Bigg] &= \lim_{n\to\infty}\sum_{i=1}^n\EE^x\Big[ A_{t_{i-1}}(\gamma) \big(f(X_{t_i}-f(X_{t_{i-1}}\big)\Big]\\
            & = \lim_{n\to\infty}\sum_{i=1}^n\EE^x\Big[\EE^x\Big[ A_{t_{i-1}}(\gamma) \big(f(X_{t_i})-f(X_{t_{i-1}})\big| \Sigma_{t_{i-1}}\big)\Big]\Big]\\
            & = \lim_{n\to\infty}\sum_{i=1}^n\EE^x\Big[A_{t_{i-1}}\big(f(-X_{t_{i-1}})-f(X_{t_{i-1}})\big)\EE^x\Big[ \chr_{\{X_{t_{i-1}}\ne X_{t_i}\}}\big| \Sigma_{t_{i-1}}\Big]\Big],
    \end{align*}
    where we used the law of total expectation and the adaptedness of $A_t$ in the last two steps, respectively.
    Now, by the Markov property \cref{eq:Markov} and the concrete transition function of $X_t$ \cref{eq:transition},
    we have
    \begin{align*}
        \EE^x\Big[ \chr_{\{X_{t_{i-1}}\ne X_{t_i}\}}\big| \Sigma_{t_{i-1}}\Big] = \frac12\big(1-e^{-2t/n}\big) = -\frac tn + O(n^{-2}).
    \end{align*}
    Combining the last two observations and recognizing the convergent Riemann sum appearing therein,
    we find
    \begin{align*}
        \EE^x\Bigg[\int_0^t A_s \sfd f(X_s)\Bigg] &= \lim_{n\to\infty}\frac tn\sum_{i=1}^n\EE^x\Big[A_{t_{i-1}}\big(f(-X_{t_{i-1}})-f(X_{t_{i-1}})\Big] + O(n^{-1})
        \\& = \int_0^t \EE^x\Big[A_{s}\big(f(-X_{s})-f(X_{s})\Big]\sfd s.\qedhere
    \end{align*}
\end{proof}
\subsection{Hilbert Space-Valued Processes}
A major role in the FKF is played by the function-valued stochastic processes given for any path $\gamma\in\Omega$ by
\begin{align}
    \label{def:Ut}
	&U_t^+(v)(\gamma)(k) = \int_0^t \e^{-s\omega(k)} \gamma_sv(k)\d s, && U_t^-(v)(\gamma)(k) = \int_0^t \e^{-(t-s)\omega(k)} \gamma_s v(k) \d s,
\end{align}
which are well-defined for almost every $k\in\cM$, by the fact that the integrands are continuous except at finitely many points in the interval $[0,t]$.

We collect relevant properties of $U_t^\pm(v)(\gamma)$ below.
\begin{prop}\label{Properties U plus minus}
    Let $v=v_{\sfr\sfe\sfg}+v_{\sfU\sfV}$ with $v_{\sfr\sfe\sfg}\in\fV_{\sfr\sfe\sfg}$, $v_{\sfU\sfV}\in\fV_{\sfU\sfV}$ and let $\gamma\in\Omega$.
	Then:
    \begin{enumprop}
        \item\label{prop:U.1} $U_t^\pm(v)(\gamma)\in \fV_{\sfr\sfe\sfg} = \sD(\omega^{-1/2})$ and
            \[ \norm{U_t^\pm(v)(\gamma) / \min(1,\sqrt{t\omega}) } \le t\norm{v_{\sfr\sfe\sfg}/ \min(1,\sqrt{t\omega})} + \norm{v_{\sfU\sfV}/\omega}, \quad t>0;  \]
        \item\label{prop:U.2} we have the flow equations
            \begin{align*}
                U_{t+s}^\pm(v) = e^{\min(0,\pm s\omega)}U_t^\pm(v) + e^{\min(0,\mp t\omega)}U_s^\pm(v)\circ\tau_t;
            \end{align*}
        \item\label{prop:U.3} for any $g\in\sD(\omega)$, the maps $t\mapsto \braket{g,U_t^\pm(v)(\gamma)}$ are absolutely continuous for $t$ in a compact interval and differentiable at all continuity points of $\gamma$ with derivative
            \begin{align*}
                &\partial_t\!\Braket{g,U_t^+(v)(\gamma)} = \gamma_t \braket{g,e^{-t\omega}v_{\sfr\sfe\sfg}} + \gamma_t\braket{\omega g, e^{-t\omega} v_{\sfU\sfV}/\omega},\\
                &\partial_t\!\Braket{g,U_t^-(v)(\gamma)} = -\braket{\omega g, U_t^-(v)(\gamma)} + \gamma_t \braket{g,v_{\sfr\sfe\sfg}} + \gamma_t\braket{\omega g, v_{\sfU\sfV}/\omega}
            \end{align*}
        \item\label{prop:U.4}  $U_t^\pm(v)\circ X:\Omega\to L^2(\cM,\mu)$ are $\Sigma_t$-$\fB(L^2(\cM,\mu))$-measurable stochastic process.
    \end{enumprop}
\end{prop}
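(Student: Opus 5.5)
By linearity in $v$, I will treat $v_{\sfr\sfe\sfg}$ and $v_{\sfU\sfV}$ separately. The key tool is that every path $\gamma\in\Omega$ is piecewise constant on $[0,t]$ with finitely many jumps. Hence
\[
U_t^\pm(v)(\gamma)(k)=v(k)\sum_j \gamma_{s_j}\int_{s_{j}}^{s_{j+1}} e^{-s\omega(k)}\,\sfd s
\]
(and analogously for $U_t^-$), and each interval integral equals $\big(e^{-s_j\omega}-e^{-s_{j+1}\omega}\big)/\omega$.

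For \cref{prop:U.1}, I bound the two parts of $v$ separately. For $v_{\sfr\sfe\sfg}$ I use the trivial estimate $|\int_0^t e^{-s\omega}\gamma_s\,\sfd s|\le t$, which gives $\norm{U^\pm_t(v_{\sfr\sfe\sfg})/\min(1,\sqrt{t\omega})}\le t\norm{v_{\sfr\sfe\sfg}/\min(1,\sqrt{t\omega})}$. For $v_{\sfU\sfV}$ I write $U_t^\pm(v_{\sfU\sfV})=(v_{\sfU\sfV}/\omega)\cdot \omega\int_0^t e^{-s\omega}\gamma_s\,\sfd s$. The second factor is bounded by $\omega\int_0^t e^{-s\omega}\,\sfd s=1-e^{-t\omega}\le\min(1,t\omega)$, and dividing by $\min(1,\sqrt{t\omega})$ leaves a factor at most $\min(1,\sqrt{t\omega})\le1$. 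Membership in $\sD(\omega^{-1/2})$ then follows because $\min(1,\sqrt{t\omega})\le\sqrt{t\omega}$: the bound controls $\norm{U/\sqrt\omega}$, and $\norm U$ is controlled the same way.

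For \cref{prop:U.2}, I split $\int_0^{t+s}$ at $t$ and substitute $r\mapsto r+t$ in the second piece, using $\gamma_{r+t}=(\tau_t\gamma)_r$. For $U^+$ this gives $U^+_t+e^{-t\omega}U^+_s\circ\tau_t$. For $U^-$ the factor $e^{-(t+s-r)\omega}$ yields $e^{-s\omega}U^-_t+U^-_s\circ\tau_t$. Both match the stated formula with the $\min$ exponents.

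For \cref{prop:U.3}, I use Fubini to write
\[
\braket{g,U_t^+(v)}=\int_0^t\gamma_s\braket{g,e^{-s\omega}v}\,\sfd s.
\]
The integrand is bounded, since $|\braket{g,e^{-s\omega}v_{\sfU\sfV}}|=|\braket{\omega g,e^{-s\omega}v_{\sfU\sfV}/\omega}|\le\norm{\omega g}\norm{v_{\sfU\sfV}/\omega}$, and it is continuous except at the jumps of $\gamma$. The fundamental theorem of calculus then gives absolute continuity and the stated derivative at continuity points. This is where $g\in\sD(\omega)$ is needed: it makes the pairing with $v_{\sfU\sfV}\notin L^2$ meaningful. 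For $U^-$ I write $\braket{g,U_t^-}=\int_0^t\gamma_s\braket{e^{-(t-s)\omega}g,v}\,\sfd s$ and differentiate under the integral (Leibniz rule). The derivative of $e^{-(t-s)\omega}g$ in $t$ is $-\omega e^{-(t-s)\omega}g$, which is dominated by $\omega g$. Integrating this term back produces $-\braket{\omega g,U^-_t}$, and the boundary term is $\gamma_t\braket{g,v}$. The one thing to check is that $\braket{\omega g,U_t^-}$ is well defined when $v_{\sfU\sfV}$ is present, and it is, since $U_t^-\in L^2$ by \cref{prop:U.1}.

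For \cref{prop:U.4}, $L^2(\cM,\mu)$ is separable and the Borel $\sigma$-algebra is generated by the functionals $\braket{g,\cdot}$ for $g$ in a dense subset of $\sD(\omega)$, so weak measurability suffices (Pettis). The map $\gamma\mapsto\braket{g,U_t^\pm(v)(\gamma)}$ is a limit of Riemann sums in $\gamma_{s}$, $s\le t$, because the integrand is bounded and a.e. continuous in $s$. Each such sum is $\Sigma_t$-measurable, so the limit is too. I expect the main obstacle to be this measurability step, together with the careful handling of $v_{\sfU\sfV}$ in \cref{prop:U.3}; the rest is direct computation.
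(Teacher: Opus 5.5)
\textbf{Verdict.} Parts \cref{prop:U.1}, \cref{prop:U.2}, the $U^+$ half of \cref{prop:U.3} and the Riemann-sum argument in \cref{prop:U.4} are correct and essentially the paper's argument. There is, however, a genuine gap in the $U^-$ half of \cref{prop:U.3}, exactly at the singular part you flagged.

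\textbf{The gap.} After Fubini you differentiate $\int_0^t\gamma_s\braket{e^{-(t-s)\omega}g,v}\,\sfd s$ under the integral sign. You justify this by saying that $\omega e^{-(t-s)\omega}g$ is dominated by $\omega g$. That only controls the pairing with $v_{\sfr\sfe\sfg}\in L^2(\cM,\mu)$.

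For the singular part you must write $\braket{\omega e^{-(t-s)\omega}g,v_{\sfU\sfV}}=\braket{\omega^2e^{-(t-s)\omega}g,v_{\sfU\sfV}/\omega}$. For $g\in\sD(\omega)$ the only available bound is then $\norm{\omega g}\norm{v_{\sfU\sfV}/\omega}/(e(t-s))$. This blow-up is real: a pointwise majorant would require $\omega|g||v_{\sfU\sfV}|\in L^1(\cM,\mu)$, which fails for general $g\in\sD(\omega)$. For example, take $\omega(k)=k$ on $(0,\infty)$, $v_{\sfU\sfV}=k^{1/2-\epsilon}\chr_{\{k>1\}}$ and $g=k^{-3/2-\epsilon}\chr_{\{k>1\}}$. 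Then $\braket{\omega e^{-\tau\omega}g,v_{\sfU\sfV}}$ grows like $\tau^{2\epsilon-1}$ as $\tau\downarrow 0$.

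So there is no majorant integrable in $s$ uniformly for $t$ near a given $t_0$, and the Leibniz rule is not justified as written. The point you singled out as the one to check (that $\braket{\omega g,U_t^-}$ is defined) is harmless. You also never actually establish absolute continuity of $t\mapsto\braket{g,U_t^-(v)(\gamma)}$.

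\textbf{How to repair it.} Differentiate pointwise in $k$ before integrating over $\cM$; this is what the paper's appeal to Fubini and dominated convergence amounts to.
\begin{itemize}
\item For fixed $k$, set $f_k(t)\coloneqq\int_0^te^{-(t-s)\omega(k)}\gamma_s\,\sfd s$. Then $f_k'(t)=\gamma_t-\omega(k)f_k(t)$ at every continuity point of $\gamma$.
\item Since $|\omega(k) f_k(t)|\le 1-e^{-t\omega(k)}$, this derivative is bounded by $2$, so $f_k$ is $2$-Lipschitz.
\item Because $\braket{g,U_t^-(v)(\gamma)}=\int_\cM\overline{g}\,v\,f_k(t)\,\sfd\mu(k)$, the difference quotients are dominated by $2|g||v|$. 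This is integrable, since $|g||v_{\sfU\sfV}|=|\omega g|\,|v_{\sfU\sfV}/\omega|$.
\item Dominated convergence then yields Lipschitz continuity, hence absolute continuity. It also yields the stated derivative $\gamma_t\braket{g,v}-\braket{\omega g,U_t^-(v)(\gamma)}$, with $\braket{g,v_{\sfU\sfV}}$ read as $\braket{\omega g,v_{\sfU\sfV}/\omega}$.
\end{itemize}
Alternatively, use \cref{prop:U.2} in the form $\braket{g,U^-_{t+h}}-\braket{g,U^-_t}=\braket{(e^{-h\omega}-1)g,U^-_t}+\braket{g,U^-_h\circ\tau_t}$, together with $h^{-1}(e^{-h\omega}-1)g\to-\omega g$ in $L^2$.

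\textbf{Minor point on \cref{prop:U.4}.} Separability of $L^2(\cM,\mu)$ is not among the standing assumptions. Your Pettis argument still goes through, because $U_t^\pm(v)(\gamma)$ is always of the form $v\,F(\omega)$. Such functions lie in a closed subspace isometric to $L^2$ of the $\sigma$-finite pushforward of $|v|^2\mu$ under $\omega$, which is separable.
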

\begin{proof}
    The claim \subcref{prop:U.1} follows from the linearity of $v\mapsto U_t^\pm(v)(\gamma)$,
    the triangle inequality, the obvious estimate on $U_t^\pm(v_{\sfr\sfe\sfg})(\gamma)$ and by applying Fubini's theorem to estimate
    \begin{align}\nonumber
        \int_{\cM} \frac{|U^\pm_{t}(v_{\sfU\sfV})(\gamma)|^2}{\min(1,t\omega)} \sfd\mu \le \int_\cM \Bigg(\int_0^te^{-s\omega}|v_{\sfU\sfV}|\sfd s\Bigg)^2\frac{\sfd\mu}{\min(1,t\omega)} 
        = \int_\cM \frac{1-e^{-t\omega}}{\min(1,t\omega)}\cdot \frac{|v_{\sfU\sfV}|^2}{\omega^2}\sfd\mu \le \Big\|\frac {v_{\sfU\sfV}}{\omega}\Big\|^2.
    \end{align}
    The flow equation \subcref{prop:U.2} is straightforward from the definition \cref{def:Ut}.
    Also, to prove \subcref{prop:U.3},note that differentiability at continuity points of $\gamma$ follows from Fubini's theorem,
    the definition \cref{def:Ut} and the dominated convergence theorem.
    Thus, the absolute continity follows, by observing integrability of the derivative and continuity of $t\mapsto \braket{g,U_t^\pm(v)(\gamma)}$.
    Finally, the measurability \subcref{prop:U.4} of $U_t^\pm(v)\circ X$ follows from the fact that the Riemann integral (which is identical to the Lebesgue integral in this case)
    is the limit of $\Sigma_t--\fB(L^2(\mathcal{M}))$ measurable functions and therefore measurable itself.
\end{proof}
We will frequently drop the argument $X$, when considering $U_t^\pm\circ X:\Omega\to L^2(\cM,\mu)$ as stochastic process.
\subsection{The Phase Process}
Apart from the contributions given by $U^\pm_t(v)$, the vacuum expectation of the spin boson model is given by a scalar-valued stochastic process.
In the Schr\"odinger particle case, it is given by a self-attracting path measure \cite{Feynman.1955} for ultraviolet regular models
and by a Brownian stochastic integral for ultraviolet singular models \cite{MatteMoller.2018}.
Here, we provide an expression given by Riemann--Stieltjes integrals over the phase integrator
\begin{align}
    \nu(v)(t) \coloneqq -\int_{\cM}e^{-t\omega}\frac{|v|^2}{\omega}\sfd\mu,
\end{align}
which is easily observed to be well-defined for all $t>0$ and $v\in\fV_{\sfr\sfe\sfg}+\fV_{\sfU\sfV}$.
Let us collect important properties of $\nu$.
\begin{prop}
    \label{lem:nu}
    Let $v,f\in\fV_{\sfr\sfe\sfg} + \fV_{\sfU\sfV}$. Then:
    \begin{enumprop}
        \item\label{lem:nu.1} $\nu(v)\in L^1_{\sfl\sfo\sfc}([0,\infty))\cap C^1((0,\infty))$ with $\nu(v)(t)' = \int_{\cM}e^{-t\omega}|v|^2\sfd\mu$ for $t>0$;
        \item\label{lem:nu.2} if $\mu(\{vf\ne 0\})=0$, then $\nu(v+f)=\nu(v)+\nu(f)$;
        \item\label{lem:nu.3} if $v\in\fV_{\sfU\sfV}$, then
            \[ \int_0^t \nu(v)(t-s) \,\gamma_s\, \sfd s = -\braket{U_t^-(v)(\gamma),v/\omega}, \quad\ t>0,\ \gamma\in\Omega.\]
    \end{enumprop}
\end{prop}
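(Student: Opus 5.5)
The three parts of \cref{lem:nu} are elementary consequences of Fubini--Tonelli and dominated convergence. The only preliminary is an integrability bound for $|v|^2/\omega$. Write $v=v_{\sfr\sfe\sfg}+v_{\sfU\sfV}$. Then $|v|^2\le 2|v_{\sfr\sfe\sfg}|^2+2|v_{\sfU\sfV}|^2$, so
\[
e^{-t\omega}\frac{|v|^2}{\omega}\le 2\frac{|v_{\sfr\sfe\sfg}|^2}{\omega}+2e^{-t\omega}\omega\frac{|v_{\sfU\sfV}|^2}{\omega^2}.
\]
Since $\sup_{\omega>0}\omega e^{-t\omega}=(et)^{-1}$, the right-hand side is $\mu$-integrable for every $t>0$. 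Hence $\nu(v)(t)$ is finite for $t>0$, and the bound is uniform for $t\ge t_0>0$.

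\textbf{Part (i).} For local integrability on $[0,T]$ I use Tonelli:
\[
\int_0^T|\nu(v)(t)|\,\sfd t=\int_\cM\frac{1-e^{-T\omega}}{\omega}\frac{|v|^2}{\omega}\,\sfd\mu .
\]
Using $1-e^{-T\omega}\le\min(1,T\omega)$, the $v_{\sfr\sfe\sfg}$ part is bounded by $T\|v_{\sfr\sfe\sfg}/\sqrt\omega\|^2$ and the $v_{\sfU\sfV}$ part by $\|v_{\sfU\sfV}/\omega\|^2$. So $\nu(v)\in L^1_{\sfl\sfo\sfc}([0,\infty))$.

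For differentiability at $t>0$, I differentiate under the integral. The derivative of the integrand is $e^{-t\omega}|v|^2$. On $[t_0,\infty)$ it is dominated by $2e^{-t_0\omega}|v_{\sfr\sfe\sfg}|^2+2e^{-t_0\omega}|v_{\sfU\sfV}|^2$. This is integrable because $\omega e^{-t_0\omega}$ is bounded, giving the first term $\le C\,|v_{\sfr\sfe\sfg}|^2/\omega$, and $\omega^2e^{-t_0\omega}$ is bounded, giving the second term $\le C\,|v_{\sfU\sfV}|^2/\omega^2$. Dominated convergence then yields both the derivative formula and its continuity, hence $\nu(v)\in C^1((0,\infty))$.

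\textbf{Part (ii).} This is immediate. If $vf=0$ $\mu$-a.e., then $|v+f|^2=|v|^2+|f|^2$ a.e., and linearity of the integral gives $\nu(v+f)=\nu(v)+\nu(f)$.

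\textbf{Part (iii).} Here $v\in\fV_{\sfU\sfV}$ and $t>0$. I expand the left-hand side and apply Fubini:
\[
\int_0^t\nu(v)(t-s)\gamma_s\,\sfd s=-\int_\cM\Big(\int_0^te^{-(t-s)\omega}\gamma_s\,\sfd s\Big)\frac{|v|^2}{\omega}\,\sfd\mu .
\]
The inner integral times $v$ equals $U_t^-(v)(\gamma)$ by \cref{def:Ut}. The right-hand side is therefore $-\int_\cM\overline{U_t^-(v)(\gamma)}\,v/\omega\,\sfd\mu$, up to the conjugation convention of the inner product; this is consistent with the stated formula, since the inner integral is real and $|v|^2=\overline{v}\,v$. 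Written out, this is $-\braket{U_t^-(v)(\gamma),v/\omega}$.

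The one point that needs care, and the main obstacle, is justifying Fubini. I bound the absolute integrand by
\[
\int_0^te^{-(t-s)\omega}\,\sfd s\,\frac{|v|^2}{\omega}=\frac{1-e^{-t\omega}}{\omega^2}|v|^2\le\frac{|v|^2}{\omega^2},
\]
which is integrable exactly because $v\in\fV_{\sfU\sfV}$. This is why (iii) is stated only for $\fV_{\sfU\sfV}$. The same computation shows $|U_t^-(v)(\gamma)|\le|v|/\omega$, which gives $U_t^-(v)(\gamma)\in L^2$, consistent with \cref{prop:U.1}, so the inner product is finite.
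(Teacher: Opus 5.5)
Your proof is correct and takes essentially the same approach as the paper. The paper cites Fubini's theorem for (iii) and for the local integrability in (i), a direct differentiation-under-the-integral calculation for the $C^1$ part of (i), and additivity of the integral for (ii). Your write-up supplies the domination bounds the paper leaves implicit, via $|v|^2\le 2|v_{\sfr\sfe\sfg}|^2+2|v_{\sfU\sfV}|^2$ and the boundedness of $\omega e^{-t_0\omega}$ and $\omega^2 e^{-t_0\omega}$.
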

\begin{proof}
    Claim \subcref{lem:nu.3} and \subcref{lem:nu.1} for $v\in\fV_{\sfU\sfV}$ follow from Fubini's theorem.
    The remaining part of \subcref{lem:nu.1} is immediate and follows by a direct calculation,
    noting the boundedness and continuity of the derivative by the assumptions.
    We stress that $\nu(\nu)(t)'$ diverges as $t\to0$, if $v\notin L^2(\cM,\mu)$.
    Claim \subcref{lem:nu.2} is immediate from the additivity of integrals.
\end{proof}
In view of \cref{lem:nu.1}, for any $v\in\fV_{\sfr\sfe\sfg}+\fV_{\sfU\sfV}$, we can now define the phase process $u_t(v):\Omega\to\IR$ as
\begin{align}
    \label{eq:defut}
    u_t(v)(\gamma) \coloneqq \int_0^t\Big( \gamma_0 \nu(v)(s) + \int_0^s \nu(v)(s-r)\sfd\gamma_r\Big)\gamma_s\sfd s.
\end{align}
Let us again collect properties of $u_t$.
\begin{prop}
    \label{prop:phase}
    Let $v=v_{\sfr\sfe\sfg}+v_{\sfU\sfV}$ with $v_{\sfr\sfe\sfg}\in\fV_{\sfr\sfe\sfg}$, $v_{\sfU\sfV}\in\fV_{\sfU\sfV}$ and $\mu(\{v_{\sfr\sfe\sfg}v_{\sfU\sfV}\ne0\})=0$ and let $\gamma\in\Omega$.
    Then:
    \begin{enumprop}
        \item\label{prop:phase.1} $\displaystyle u_t(v)(\gamma)=u_t(v_{\sfr\sfe\sfg})(\gamma) + u_t(v_{\sfU\sfV})(\gamma)$;\\[-.5em]
        \item\label{prop:phase.conv} if $(v_n)\subset\fV_{\sfr\sfe\sfg}+\fV_{\sfU\sfV}$ satisfies $v_n/(\sqrt{\omega}+\omega)\xrightarrow[n\to\infty]{L^2(\cM,\mu)} v/(\sqrt{\omega}+\omega)$, then $u_t(v_n)(\gamma)\xrightarrow{n\to\infty} u_t(v)(\gamma)$;
        \item\label{prop:phase.2}  $\displaystyle u_t(v_{\sfr\sfe\sfg})(\gamma) + t \norm{\omega^{-1/2}v_{\sfr\sfe\sfg}}^2 = \int_0^t \gamma_s\braket{U_s^-|v_{\sfr\sfe\sfg}}\d s= \int_0^t\int_0^s \braket{e^{-(s-r)\omega}v_{\sfr\sfe\sfg},v_{\sfr\sfe\sfg}}\gamma_r\gamma_s \sfd r\sfd s $;
        \item\label{prop:phase.3} $\displaystyle u_t(v_{\sfU\sfV})(\gamma) = -\gamma_t\braket{U_t^-(v_{\sfU\sfV})(\gamma)|v_{\sfU\sfV}/\omega} + \int_0^t\braket{U_{s}^-(v_{\sfU\sfV})(\gamma)|v_{\sfU\sfV}/\omega}\sfd\gamma_s$;\\[-.5em]
        \item\label{prop:phase.4} we have the flow equation
            $\displaystyle u_{t+s}(v)=u_t(v)+u_s(v) \circ \tau_t + \braket{ U_t^-(v) | U_s^+(v)\circ \tau_t }$;\\[-.5em]
        \item\label{prop:phase.5} $u_t(v)\circ X:\Omega\to \IR$ is $\Sigma_t$-$\fB(\IR)$-measurable.
    \end{enumprop}
\end{prop}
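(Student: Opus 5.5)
The plan is to rewrite $u_t(v)$ through integration by parts of the Riemann--Stieltjes integral against the step function $\gamma$. On a compact interval $\gamma$ has only finitely many jumps $0<r_1<\dots<r_N\le t$, so
\[ \int_0^s \nu(v)(s-r)\sfd\gamma_r=\sum_{j:r_j\le s}\nu(v)(s-r_j)(\gamma_{r_j}-\gamma_{r_j-}). \]
Thus $u_t(v)(\gamma)$ is a finite combination of integrals $\int_{r}^t\nu(v)(s-r)\gamma_s\sfd s$, which are finite by \cref{lem:nu.1}. Since $\nu$ is linear in $|v|^2$, \cref{lem:nu.2} gives $\nu(v)=\nu(v_{\sfr\sfe\sfg})+\nu(v_{\sfU\sfV})$, and this yields \subcref{prop:phase.1} at once. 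Measurability \subcref{prop:phase.5} follows as in \cref{prop:U.4}: the integrals are limits of Riemann sums of $\Sigma_t$-measurable functions of finitely many evaluations $X_r$, $r\le t$.

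For \subcref{prop:phase.2}, I would write $\nu(v_{\sfr\sfe\sfg})(s)=-\|\omega^{-1/2}v_{\sfr\sfe\sfg}\|^2+\int_0^s\braket{e^{-\sigma\omega}v_{\sfr\sfe\sfg},v_{\sfr\sfe\sfg}}\sfd\sigma$, using \cref{lem:nu.1} and that $v_{\sfr\sfe\sfg}\in L^2$, so that $\nu'$ is bounded. Since $\nu$ is then absolutely continuous, the inner term $\gamma_0\nu(s)+\int_0^s\nu(s-r)\sfd\gamma_r$ equals $\nu(0)\gamma_s+\int_0^s\nu'(s-r)\gamma_r\sfd r$ by Stieltjes integration by parts. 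Multiplying by $\gamma_s$ and using $\gamma_s^2=1$ gives $-t\|\omega^{-1/2}v_{\sfr\sfe\sfg}\|^2+\int_0^t\int_0^s\braket{e^{-(s-r)\omega}v_{\sfr\sfe\sfg},v_{\sfr\sfe\sfg}}\gamma_r\gamma_s\sfd r\sfd s$. By Fubini, the inner integral is $\braket{U_s^-,v_{\sfr\sfe\sfg}}$.

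For \subcref{prop:phase.3}, I would apply the same integration by parts, but now with the singular $\nu(v_{\sfU\sfV})$, keeping the order of integration reversed. First, by \cref{lem:nu.3}, $g(s):=\int_0^s\nu(s-r)\gamma_r\sfd r=-\braket{U_s^-(v_{\sfU\sfV}),v_{\sfU\sfV}/\omega}$. By \cref{prop:U.3} applied formally with $g=v_{\sfU\sfV}/\omega$, or directly via Fubini, $g$ is absolutely continuous with $g'(s)=\nu(0^+)\gamma_s+\int_0^s\nu'(s-r)\gamma_r\sfd r$, which has to be interpreted carefully. The cleaner route is to verify the identity $\gamma_0\nu(s)+\int_0^s\nu(s-r)\sfd\gamma_r=\frac{\sfd}{\sfd s}\!\int_0^s\nu(s-r)\gamma_r\sfd r$ between jumps. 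This holds because both sides are piecewise given by $\gamma$-weighted values of $\nu$: between jumps, $\frac{\sfd}{\sfd s}\int_0^s\nu(r)\gamma_{s-r}\sfd r=\nu(s)\gamma_0+\sum_j\nu(s-r_j)(\gamma_{r_j}-\gamma_{r_j-})$. Hence $u_t=\int_0^t g'(s)\gamma_s\sfd s$, and integrating by parts against the step function $\gamma$ gives $g(t)\gamma_t-\int_0^t g(s)\sfd\gamma_s$, which is exactly \subcref{prop:phase.3}, using $g(0)=0$. The same derivative identity also gives a cleaner proof of \subcref{prop:phase.2}.

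Convergence \subcref{prop:phase.conv} I would obtain from the representations above. Split each $v_n$ with the same cutoff $\chr_{\{\omega\le1\}}$ and $\chr_{\{\omega>1\}}$, so that the disjoint-support hypothesis holds. The hypothesis then gives $L^2$ convergence of $v_n\chr_{\{\omega\le1\}}/\sqrt\omega$ and of $v_n\chr_{\{\omega>1\}}/\omega$. Then \cref{prop:U.1}, applied linearly to differences, gives $L^2$ convergence of $U_s^-$ uniformly in $s\le t$, and the formulas in \subcref{prop:phase.2} and \subcref{prop:phase.3} converge, since the Stieltjes integral is a finite sum.

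Finally, the flow equation \subcref{prop:phase.4} follows by splitting $\int_0^{t+s}=\int_0^t+\int_t^{t+s}$ in the double-integral form and inserting the flow of $U$ from \cref{prop:U.2}. The cross term is $\int_t^{t+s}\gamma_{s'}\int_0^t\nu'(s'-r)\gamma_r\sfd r\sfd s'=\braket{U_t^-,U_s^+\circ\tau_t}$ by Fubini. I would prove this first for $v\in\fV_{\sfr\sfe\sfg}$ and pass to general $v$ using \subcref{prop:phase.conv} and the continuity of $U^\pm$. The main obstacle is the singular-term derivative identity: I must justify differentiating $\int_0^s\nu(r)\gamma_{s-r}\sfd r$ even though $\nu$ is only $L^1_{\sfl\sfo\sfc}$ near $0$, which is still fine because the derivative only involves values of $\nu$ at positive arguments.
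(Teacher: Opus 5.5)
Your proposal is correct. For items (i), (iii), (v) and (vi) it is essentially the paper's argument: your Stieltjes integration by parts in (iii) is a compact form of the paper's explicit bookkeeping over the jump times $T_j$.

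The genuine difference is in (iv) and (ii), and in the order in which they are proved.

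\textbf{The paper's route.} The paper proves (ii) first, by a direct dominated-convergence argument. It then obtains (iv) only for $v_{\sfU\sfV}\in L^2(\cM,\mu)$, by combining (iii) with a partial integration. General $v_{\sfU\sfV}\in\fV_{\sfU\sfV}$ is reached by approximation via (ii).

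\textbf{Your route.} You prove (iv) directly for every $v_{\sfU\sfV}\in\fV_{\sfU\sfV}$, with no limiting procedure. Put $N(x)=\int_0^x\nu(\sigma)\sfd\sigma$; it is absolutely continuous with $N(0)=0$ and $C^1$ on $(0,\infty)$. Then $G(s)=\int_0^s\nu(s-r)\gamma_r\sfd r=\gamma_0N(s)+\sum_{r_j\le s}(\gamma_{r_j}-\gamma_{r_j-})N(s-r_j)$. This shows that $G$ is absolutely continuous on $[0,t]$, and that off the jump set $G'(s)$ equals the inner bracket in \cref{eq:defut}. Summation by parts against the step function, together with \cref{lem:nu.3}, then gives (iv).

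You should state two points explicitly. First, the jump sum in $G'(s)$ runs over $r_j<s$. Second, $G$ is continuous across the jump times, which is exactly what $N(0)=0$ provides. Together these make the summation by parts free of extra boundary terms at the jumps.

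You then derive (ii) from the representations (iii) and (iv), using the bounds of \cref{prop:U.1} and the fixed split $\chr_{\{\omega\le1\}}$, $\chr_{\{\omega>1\}}$. This is not circular, because your proof of (iv) does not rely on (ii).

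\textbf{Comparison.} Your single identity $u_t(v)=\int_0^tG'(s)\gamma_s\sfd s$ treats the regular and singular parts uniformly, and it makes (iv) an exact identity rather than a limit. The paper's direct proof of (ii) has the advantage of not depending on (iii) or (iv). It can therefore serve as the approximation tool for both (iv) and (v).
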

\begin{rem}
    Since the phase expression is most important in the analysis of the ground state regime,
    a few comments are in order:
    \begin{enumrem}
        \item\label{rem:phase.1} Comparing \subcref{prop:phase.2} and \cref{eq:utreg} with the choice
        \[W(r)=\int_{\cM}e^{-|r|\omega}|v_{\sfr\sfe\sfg}|^2\sfd\mu,\]
        one observes that this is the phase expression known from the literature
        \cite{FannesNachtergaele.1988,SpohnDuemcke.1985,Spohn.1989,Abdessalam.2011,HirokawaHiroshimaLorinczi.2014}.
        In this prospect, the definition \cref{eq:defut} extends the standard definition and incorporates more singular interactions.
        \item\label{rem:phase.2} Furthermore, in view of \cref{rem:stochint}, the Riemann--Stieltjes integrals in \subcref{prop:phase.3}
            relate to the stochastic integral expressions used for the (complex) phase
            in the Nelson model \cite{MatteMoller.2018}, the relativistic Nelson model
            \cite{HinrichsMatte.2022} or the Fr\"ohlich model \cite{HinrichsMatte.2024}.
            Thus, the simplicity of the spin boson model allows to unify the infrared and the ultraviolet part
            in one expression \cref{eq:defut}, a feature which is not available for the more advanced models
            in which the particle is a Schr\"odinger particle.
        \item\label{rem:phase.3}We also note that for $v\in\fV_{\sfr\sfe\sfg}\cap \fV_{\sfU\sfV}$ the right hand sides of \subcref{prop:phase.2} and \subcref{prop:phase.3} equal, by an application of the partial integration formula for the Riemann--Stieltjes integral, which is a simplification of the application of It\^o's formula in the referenced articles; see below proof for details.
    \end{enumrem}
\end{rem}
\begin{proof}
    Claim \subcref{prop:phase.1} is apparent from the definition \cref{eq:defut}, the assumption $\mu(\{v_{\sfr\sfe\sfg}\ne 0\ne v_{\sfU\sfV}\})=0$ and the additivity of integrals
    and \subcref{prop:phase.conv} follows by a simple application of the dominated convergence theorem.
    Further, \subcref{prop:phase.5} follows along the lines of \cref{prop:U.4}, by interpreting the defining integrals in the Riemannian sense.

    To prove \subcref{prop:phase.2}, first note that $\nu(v_{\sfr\sfe\sfg})\in C^1([0,\infty))$ and $\nu(v_{\sfr\sfe\sfg})(0)=-\norm{\omega^{-1/2}v}^2$. Now, denote by $0=T_0<T_1<\cdots< T_n\le T_{n+1}=t$ the jumps of $\gamma$ in the interval $[0,t]$. Then, by \cref{lem:nu.1} and using $\gamma_{T_i}=\gamma_0(-1)^i = \gamma_s$ for $s\in[T_i,T_{i+1})$, the right hand side equals
    \begin{align*}
        &\int_0^t\int_0^s \braket{e^{-(s-r)\omega}v_{\sfr\sfe\sfg},v_{\sfr\sfe\sfg}}\gamma_r\gamma_s \sfd r\sfd s
        \\
        &=\sum_{i=0}^n\int_{T_i}^{T_{i+1}}\Bigg(\sum_{j=1}^i\gamma_{T_i}\gamma_{T_{j-1}}\int_{T_{j-1}}^{T_{j}} \nu'(v_{\sfr\sfe\sfg})(s-r) \sfd r\sfd s
          +  \int_{T_i}^{s} \nu'(v_{\sfr\sfe\sfg})'(s-r) \sfd r\sfd s\Bigg)\\
        & = \sum_{i=0}^n\int_{T_i}^{T_{i+1}} \Bigg(\sum_{j=1}^i (-1)^{i+j}\Big(\nu(v_{\sfr\sfe\sfg})(s-T_{j}) - \nu(v_{\sfr\sfe\sfg})(s-{T_{j-1}})\Big)
             +\nu(v_{\sfr\sfe\sfg})(s-T_i) -  \nu(v_{\sfr\sfe\sfg})(0) \Bigg)\sfd s
        \\
        & = \int_0^t \gamma_0\gamma_s \nu(v_{\sfr\sfe\sfg})(s) \sfd s + \int_0^t \sum_{j=1}^n \chr_{\{T_j<s\}}2\gamma_0\gamma_s(-1)^j \nu(v_{\sfr\sfe\sfg})(s-T_j) \sfd s - t \nu(v_{\sfr\sfe\sfg})(0).
    \end{align*}
    Thus, \subcref{prop:phase.2} now follows from the observation that for any continuous function $f:[0,\infty)\to\IC$ the Riemann--Stieltjes integral w.r.t. $\gamma_r$ satisfies
    \begin{align*}
        \int_0^s f(r)\sfd\gamma_r = \sum_{j=1}^n \chr_{\{T_j<s\}}2\gamma_0(-1)^jf(T_j).
    \end{align*}
    The claim \subcref{prop:phase.3}, at least if $v_{\sfU\sfV}\in L^2(\cM,\mu)$,
    can be inferred similar to above, by integrating out $s$ first in above argument.
    Alternatively, along the lines of \cref{rem:phase.3}, we apply the partial integration formula
    for Riemann--Stieltjes integrals to $\gamma_t\braket{U_t^-(v)(\gamma),v_{\sfU\sfV/\omega}}$ and using \cref{prop:U.3} obtain
    \begin{align*}
        \gamma_t\braket{U_t^-(v)(\gamma),v_{\sfU\sfV}/\omega}
        &= \int_0^t \braket{U_s^-(v)(\gamma),v_{\sfU\sfV}/\omega}\sfd\gamma_s - \int_0^t \braket{U_s^-(v)(\gamma),v_{\sfU\sfV}}\sfd\gamma_s + t\norm{\omega^{-1/2}v_{\sfU\sfV}}^2,
    \end{align*}
    whence \subcref{prop:phase.2} implies \subcref{prop:phase.3}, by simple rearrangement. The general claim then follows, by \subcref{prop:phase.conv} and a straightforward approximation procedure.

    To prove the flow equation \subcref{prop:phase.4}, we first observe that \subcref{prop:phase.2} and \cref{prop:U.2} imply
    \begin{align*}
        u_t(v_{\sfr\sfe\sfg})
        & = \int_0^{t+s}\gamma_r\braket{U_r^-(v_{\sfr\sfe\sfg}),v_{\sfr\sfe\sfg}} \sfd r + (t+s)\norm{\omega^{-1/2}v_{\sfr\sfe\sfg}}^2
        \\& = u_t(v_{\sfr\sfe\sfg}) + \int_0^{s}\gamma_{t+r}\braket{U_{t+r}^-(v_{\sfr\sfe\sfg}),v_{\sfr\sfe\sfg}} \sfd r + s\norm{\omega^{-1/2}v_{\sfr\sfe\sfg}}^2
        \\& = u_t(v_{\sfr\sfe\sfg}) + u_s(v_{\sfr\sfe\sfg})\circ\tau_t +  \int_0^s\braket{e^{-r\omega}U_r^-,\gamma_{t+r}v_{\sfr\sfe\sfg}}\sfd r.
    \end{align*}
    Now, by Fubini's theorem, the last summand is $\braket{ U_t^-(v_{\sfr\sfe\sfg}) | U_s^+(v_{\sfr\sfe\sfg})\circ \tau_t }$, which proves the claim for $v\in\fV_{\sfr\sfe\sfg}$. The general statement again follows from \subcref{prop:phase.conv} and \cref{prop:U.1} (combined with the linearity of $U_t^\pm$ in $v$), by approximation.
    
    This completes the proof.
\end{proof}
\subsection{Feynman--Kac Formula}
We can now state and prove our main result.
In the statement and throughout this section, we equip the interaction space with the norm
\begin{align*}
    \norm{v}_{\omega} \coloneqq \|(\sqrt{\omega}+\omega)^{-1}v\|^2_{L^2(\cM,\mu)}, \quad v\in\fV_{\sfr\sfe\sfg} + \fV_{\sfU\sfV}.
\end{align*}
\begin{thm}
    \label{thm:FKF}
    Let $v\in\fV_{\sfr\sfe\sfg}+\fV_{\sfU\sfV}$.
    Then
    \begin{align}
        \label{def:Tt}
        \big(T_t(v) \psi\big)(x) \coloneqq \EE^x\Big[ e^{u_t(v)(X)}I_{t/2}\big(\omega,U^+_t(v)(X)\big)\!^*I_{t/2}\big(\omega,U^-_t(v)(X)\big)\psi(X_t) \Big],\quad \psi\in L^2(\{\pm1\};\FS)
    \end{align}
    defines a strongly continuous semigroup on $L^2(\{\pm1\};\FS)$ with generator $H_{\sfr\sfe\sfn}(v)$, i.e.,
    \begin{align}
        e^{-t H_{\sfr\sfe\sfn}(v)} = T_t(v), \quad t>0.
    \end{align}
    Furthermore, the map $\fV_{\sfr\sfe\sfg}+\fV_{\sfU\sfV} \to \cB(L^2(\{\pm1\};\FS))$, $v\mapsto T_t(v)$ is continuous.
\end{thm}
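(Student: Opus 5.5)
I would follow the standard route of \cite{GueneysuMatteMoller.2017,MatteMoller.2018}: first show that $T_t(v)$ is a well-defined bounded operator family, depending continuously on $v$. Next, I show it is a strongly continuous semigroup. Then I identify its generator, first on a core and for regular $v$, and finally remove the regularity assumption by approximation.

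\emph{Boundedness and continuity in $v$.} By \cref{prop:U.1}, $\sup_\gamma\|U_t^\pm(v)(\gamma)/\min(1,\sqrt{t\omega})\|$ is finite. Combined with \cref{eq:Ftbound} for $t/2$, this gives a pathwise bound on both $I_{t/2}$ factors that is uniform in $\gamma$. For the phase, I use \cref{prop:phase.2,prop:phase.3}. The regular part satisfies $|u_t(v_{\sfr\sfe\sfg})|\le t\|\omega^{-1/2}v_{\sfr\sfe\sfg}\|^2$. The UV part is a boundary term plus a Stieltjes integral. The boundary term is bounded by Cauchy--Schwarz and \cref{prop:U.1}. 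The Stieltjes integral is bounded by $2N_t\,\sup_s|\langle U_s^-,v_{\sfU\sfV}/\omega\rangle|$, where $N_t$ is the number of jumps. This gives $e^{u_t}\le e^{C(1+N_t)}$, and $\EE^x[e^{cN_t}]<\infty$ since $N_t$ is Poisson. Hence $\|T_t(v)\|\le C_te^{c_t\|v\|_\omega}$ after choosing a decomposition with $\|v_{\sfr\sfe\sfg}/\sqrt\omega\|+\|v_{\sfU\sfV}/\omega\|\lesssim\|v\|_\omega$ (e.g.\ splitting at $\omega=1$). Continuity in $v$ follows by the same bounds applied to differences, using linearity of $U_t^\pm$ in $v$, \cref{prop:phase.conv}, Lipschitz estimates for $I_t(\omega,f)$ in $f$ (from its series), and dominated convergence with the Poisson-exponential majorant. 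Norm continuity needs uniformity in $\psi$, which holds because all estimates are pathwise operator-norm bounds.

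\emph{Semigroup property.} I compute $T_sT_t$ with the Markov property \cref{eq:Markov}. Pathwise, on exponential vectors, \cref{eq:It,eq:Itstar} give
\[
I_{t/2}(\omega,f^-)I_{s/2}(\omega,g^+)^*=e^{-\langle f^-,g^+\rangle}I_{s/2}(\omega,e^{-t\omega}g^+)^*\ldots,
\]
more precisely a product reassembling into $e^{\langle U_t^-,U_s^+\circ\tau_t\rangle}I_{(t+s)/2}(\omega,U^+_{t+s})^*I_{(t+s)/2}(\omega,U^-_{t+s})$ via the flow equations \cref{prop:U.2}. The scalar factor combines with $e^{u_t+u_s\circ\tau_t}$ to give $e^{u_{t+s}}$ by \cref{prop:phase.4}. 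Measurability is provided by \cref{prop:U.4,prop:phase.5}. Strong continuity at $t=0$ follows from dominated convergence on exponential vectors, using $U_t^\pm\to0$ and $u_t\to0$ pathwise, together with the uniform bounds above.

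\emph{Generator.} For $v\in\fV_{\sfr\sfe\sfg}$ and $\psi$ with $\psi(y)=\eps(g_y)$, $g_y\in\sD(\omega)$, I would differentiate $\langle \eps(h),(T_t\psi)(x)\rangle$ at $t=0$. This is an explicit exponential of $u_t$, $\langle h,U_t^+\rangle$, $\langle U_t^-,g\rangle$ and $\langle e^{-t\omega}h,g\rangle$. Using \cref{prop:U.3} and \cref{lem:stieltjesexp} for the jump in $\psi(X_t)$, the derivative reproduces $-\langle\eps(h),(H_{\sfr\sfe\sfn}\psi)(x)\rangle$ via \cref{eq:rewriting}. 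By the semigroup property, it suffices to do this at $t=0$, and since the weak generator on a core contained in $\sD(H)$ agrees with $-H$, we get $T_t=e^{-tH}$ for regular $v$. For general $v$, I take $v_n=v\chr_{\{\omega\le n\}}\in\fV_{\sfr\sfe\sfg}$ with $\|v_n-v\|_\omega\to0$. Then $T_t(v_n)\to T_t(v)$ in norm by the continuity step. On the other hand, $e^{-tH_{\sfr\sfe\sfn}(v_n)}\to e^{-tH_{\sfr\sfe\sfn}(v)}$ at least strongly, which I would check on the core \cref{eq:core} via the Weyl-dressing formula \cref{eq:vHren} and Trotter--Kato, since $\sW(\pm v_{n,\sfU\sfV}/\omega)\to\sW(\pm v_{\sfU\sfV}/\omega)$ strongly. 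Alternatively, I would differentiate directly on $\cE_{\sfr\sfe\sfn}(x,v)$ using \cref{prop:phase.3}.

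The main obstacle is the exponential moment bound on $e^{u_t(v_{\sfU\sfV})}$ uniformly near the UV singularity, i.e.\ controlling the Stieltjes term, together with justifying the generator identification on the singular core. There I would first try the approximation route, since it avoids differentiating through $\nu'(t)$, which diverges at $0$.
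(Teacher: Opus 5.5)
Your strategy is viable, but it identifies the generator differently from the paper. Your bounds, norm continuity in $v$ and semigroup property match \cref{prop:Wt} and the subsequent corollary; the Poisson majorant $\EE^x[e^{cN_t}]<\infty$ is what \cref{prop:Wt.bound} needs.

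For the generator, the paper never passes through regular $v$. In \cref{prop:Wt.3} it differentiates $t\mapsto\braket{\eps(g),W_t(v)(\gamma)\eps(h)}$ directly for $h+\gamma_t v_{\sfU\sfV}/\omega\in\sD(\omega)$, and matches the result with the dressed form \cref{eq:vHren} of $h(v,\gamma_t)$. Integrating over $[0,t]$, taking $\EE^x$ and applying \cref{lem:stieltjesexp} then gives $\braket{\Phi,T_t\Psi}-\braket{\Phi,\Psi}=-\int_0^t\braket{\Phi,T_sH_{\sfr\sfe\sfn}(v)\Psi}\sfd s$ on the core \cref{eq:core}. Your worry about $\nu'$ diverging at $0$ does not arise there. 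By \cref{prop:phase.3}, $u_t(v_{\sfU\sfV})+\gamma_t\braket{U_t^-(v_{\sfU\sfV}),v_{\sfU\sfV}/\omega}$ is piecewise constant in $t$. Hence only $u_t(v_{\sfr\sfe\sfg})$ and $\braket{U_t^-(v),h+\gamma_tv_{\sfU\sfV}/\omega}$ are differentiated, via \cref{prop:U.3}.

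So the paper needs no a priori information on $v\mapsto H_{\sfr\sfe\sfn}(v)$ and obtains \cref{thm:SEren} as an output. Your route keeps all differentiation in the plain setting \cref{eq:rewriting}. In exchange it needs strong resolvent continuity of $v\mapsto H_{\sfr\sfe\sfn}(v)$ as an input, so the dressed structure of $h(v,x)$ enters anyway, at operator level.

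Two of your steps fail as written and need repair.

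\emph{First}, ``it suffices to do this at $t=0$'' is not correct. A right derivative at $0$ tested only against a dense set, here exponential vectors, does not place $\psi$ in the generator's domain. For the translation semigroup on $L^2(\IR)$, $\chr_{[0,1]}$ tested against smooth functions vanishing at $0$ and $1$ has derivative $0$, yet it is not in the domain. The semigroup property does not help either, since $T_t^*$ does not preserve exponential test vectors. Instead, integrate your pathwise identity over $[0,t]$ as the paper does, and extend to all $\Phi$ by boundedness of $T_s$. Then $T_t\Psi-\Psi=-\int_0^tT_sH_{\sfr\sfe\sfn}(v)\Psi\,\sfd s$ on a core, which does identify the generator.

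\emph{Second}, Trotter--Kato cannot be run on the core \cref{eq:core}. If $v_{\sfU\sfV}\notin L^2(\cM,\mu)$, then $\eps(-xv_{\sfU\sfV}/\omega+g)\notin\sD(\dG(\omega))=\sD(h(v_n,x))$, so $H_{\sfr\sfe\sfn}(v_n)$ is not even defined on that core. Use $n$-dependent approximants instead. Split $v$ at $\omega=1$, take $\eta(x)\in\cE(\sD(\omega))$, and set $\phi_n(x)=\sW(-xv_{\sfU\sfV}\chr_{\{\omega\le n\}}/\omega)\eta(x)$, which lies in $\sD(\dG(\omega))$, and $\phi(x)=\sW(-xv_{\sfU\sfV}/\omega)\eta(x)$. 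Then $\phi_n\to\phi$. By \cref{eq:vHren}, the bracket acting on $\eta(x)$ is $n$-independent, so $H_n\phi_n\to H\phi$, where $H_n\coloneqq H_{\sfr\sfe\sfn}(v_n)$ and $H\coloneqq H_{\sfr\sfe\sfn}(v)$. The identity $(H_n+i)^{-1}(H+i)\phi-\phi=(H_n+i)^{-1}[(H+i)\phi-(H_n+i)\phi_n]+\phi_n-\phi$ gives strong resolvent convergence. Since $H_n\ge0$, this upgrades to strong convergence of the semigroups.

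Two minor slips: the Poisson majorant gives only local boundedness in $v$, not $C_te^{c_t\|v\|_\omega}$; and the correct bound is $|u_t(v_{\sfr\sfe\sfg})|\le 2t\|\omega^{-1/2}v_{\sfr\sfe\sfg}\|^2$.
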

We divide the proof into several lemmas, starting with an analysis of the $\cB(\FS)$-valued integrand
in the Feynman--Kac formula \cref{def:Tt}, i.e.,
\begin{align}
    \label{def:Wt}
	W_0(v)(\gamma)\coloneqq \Id_{\FS},\quad W_t(v)(\gamma) \coloneqq e^{u_t(v)(\gamma)}I_{t/2}\big(\omega,U^+_t(v)(\gamma)\big)^*I_{t/2}\big(\omega,U^-_t(v)(\gamma)\big),\quad t>0.
\end{align}
\begin{prop}
	\label{prop:Wt}
    Let $v=v_{\sfr\sfe\sfg}+v_{\sfU\sfV}$ with $v_{\sfr\sfe\sfg}\in\fV_{\sfr\sfe\sfg}$, $v_{\sfU\sfV}\in\fV_{\sfU\sfV}$ and $\mu(\{v_{\sfr\sfe\sfg}v_{\sfU\sfV}\ne0\})=0$ and let $\gamma\in\Omega$.
    Then, for all $t\ge 0$, we have that
    \begin{enumprop}
        \item\label{prop:Wt.1} the map $W_t(v)\circ X:\Omega\to\cB(\FS)$ is $\Sigma_t$--$\mathfrak{B}(\cB(\mathcal{F}))$-measurable;
        \item\label{prop:Wt.bound} $\gamma\mapsto \sup_{s\in[0,t]}\sup_{\|v\|_{\omega}\le 1}\|W_s(v)(\gamma)\|_{\cB(\FS)}\in L^1(\Omega,\PP^x)$ for $x=\pm1$;
        \item\label{prop:Wt.cont2} the map $\fV_{\sfr\sfe\sfg} + \fV_{\sfU\sfV}\to \cB(\FS)$, $v\mapsto W_t(v)(\gamma)$ is continuous w.r.t. the norm operator topology;
        \item\label{prop:Wt.cont} the map $[0,\infty)\to \cB(\FS)$, $t\mapsto W_t(v)(\gamma)$ is continuous w.r.t. the strong operator topology;
        \item\label{prop:Wt.2} $W_{t+s}(v)=W_t(v)(W_s(v)\circ\tau_t)$ for all $s\ge 0$;
        \item\label{prop:Wt.3} for all  $\xi\in\cE_{\sfr\sfe\sfn}(+1,v)+\cE_{\sfr\sfe\sfn}(-1,v)+\cE(\sD(\omega))$ and $\zeta\in\cE(L^2(\cM,\mu))$,
            the map $t\mapsto \braket{\zeta,W_t(v)(\gamma)\xi}$ is absolutely continuous
             and if $t$ is a continuity point of $\gamma$ such that $\xi\in\cE_{\sfr\sfe\sfn}(\gamma_t,v)$, then
             \[
                \partial_t \braket{\zeta,W_t(v)(\gamma)\xi}
                = -\braket{\zeta,W_t(v)(\gamma)h(v,\gamma_t)\xi}
                .
            \]
    \end{enumprop}
\end{prop}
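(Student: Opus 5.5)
We prove the six items largely in the order stated, reducing everything to exponential vectors via \cref{eq:It} and \cref{eq:Itstar}, on which $W_t(v)(\gamma)$ acts explicitly:
\[
W_t(v)(\gamma)\eps(g) = e^{u_t(v)(\gamma)-\braket{U_t^-,g}}\,\eps\big(e^{-t\omega}g - U_t^+\big),
\]
with $U_t^\pm=U_t^\pm(v)(\gamma)$. Here we used $I_{t/2}^*I_{t/2}\eps(g)=e^{-\braket{U^-_t,g}}\eps(e^{-t\omega}g-U^+_t)$.

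\textbf{Measurability, bounds and continuity.} Item \subcref{prop:Wt.1} follows from \cref{prop:U.4} and \cref{prop:phase.5}, together with continuity of $f\mapsto I_{t/2}(\omega,f)$ in operator norm on $\sD(\omega^{-1/2})$. We obtain this continuity from the series expansion of \cite{GueneysuMatteMoller.2017}, exactly as for \cref{eq:Ftbound}. For \subcref{prop:Wt.bound}, we combine \cref{eq:Ftbound} (with $t/2$) and \cref{prop:U.1}. Since $\|v_{\sfr\sfe\sfg}/\min(1,\sqrt{t\omega})\|$ is bounded by $C_t\|v\|_\omega$ after a suitable splitting (we choose the decomposition $v\chr_{\{\omega\le1\}}+v\chr_{\{\omega>1\}}$, so both pieces are controlled by $\|v\|_\omega$), the operator factors are bounded deterministically, uniformly in $\gamma$ and $s\le t$.

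The phase $u_s(v)$ must be bounded by something integrable. For $v_{\sfr\sfe\sfg}$ we use \cref{prop:phase.2}, which gives $|u_s|\le 2t\|\omega^{-1/2}v_{\sfr\sfe\sfg}\|^2$. For $v_{\sfU\sfV}$ we use \cref{prop:phase.3}: the boundary term is bounded by $\|U^-\|\|v_{\sfU\sfV}/\omega\|$, and the Stieltjes integral is at most $2N_t\sup_s|\braket{U_s^-,v_{\sfU\sfV}/\omega}|$, where $N_t$ is the number of jumps. Thus $\sup_{s\le t}\|W_s\|\le C e^{cN_t}$. Since $N_t$ is Poisson distributed, this is integrable, which gives \subcref{prop:Wt.bound}. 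Item \subcref{prop:Wt.cont2} then follows from linearity of $v\mapsto U^\pm_t$ and \cref{prop:phase.conv}, upgraded to norm continuity of $u_t$ via the same explicit bounds (the bounds are linear/quadratic in $v$). Item \subcref{prop:Wt.cont} follows from the formula above: for fixed $g$ the exponential vector is continuous in $t$ by \cref{prop:U.3} and continuity of $u_t$. At $t=0$ we need $U_t^\pm\to0$ in $L^2$, which holds by dominated convergence. Strong continuity then follows by density of exponential vectors together with the local uniform bound.

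\textbf{Flow and derivative.} For \subcref{prop:Wt.2} we compare both sides on $\eps(g)$. The right-hand side gives
\[
e^{u_s\circ\tau_t-\braket{U^-_s\circ\tau_t,g}}\,W_t\,\eps(e^{-s\omega}g-U_s^+\circ\tau_t),
\]
and expanding $W_t$ produces the cross term $e^{\braket{U_t^-,U_s^+\circ\tau_t}}$. This matches the left-hand side by \cref{prop:U.2} and \cref{prop:phase.4}.

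For \subcref{prop:Wt.3} we take $\zeta=\eps(f)$ and $\xi=\eps(g)$, where $g\in -\gamma_t v_{\sfU\sfV}/\omega+\sD(\omega)$ or $g\in\sD(\omega)$. Then
\[
\braket{\zeta,W_t\xi}=\exp\big(u_t-\braket{U_t^-,g}+\braket{f,e^{-t\omega}g-U_t^+}\big).
\]
Absolute continuity follows from \cref{prop:U.3} and absolute continuity of $t\mapsto u_t$. The latter holds because, by \cref{eq:defut}, $u_t$ is an integral in $s$ of an integrand which is locally integrable by \cref{lem:nu.1}. At a continuity point we differentiate the exponent. The derivative of $u_t$ is $\gamma_t(\gamma_0\nu(t)+\int_0^t\nu(t-r)\sfd\gamma_r)$, and we rewrite it using \cref{lem:nu.3} and integration by parts as $\gamma_t\braket{U_t^-,v}$ modulo the renormalization constant, in the regular part. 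We then compare with $-\braket{\zeta,W_t h(v,\gamma_t)\xi}$, computing $h(v,x)\eps(g)$ through the Weyl conjugation \cref{eq:vHren}. Writing $g=-x v_{\sfU\sfV}/\omega+\tilde g$ with $\tilde g\in\sD(\omega)$, the action of $h(v,x)$ on $\eps(g)$ is $\big(a^\dagger(\omega g)+x\,a^\dagger(v)+c(g)\big)\eps(g)$ with a scalar $c(g)$. Here $\omega g+xv$ lies in $L^2$ because the singular parts cancel.

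\textbf{Main obstacle.} The hardest step is this bookkeeping: matching the divergent pieces. Individually, $\braket{U_t^-,v_{\sfU\sfV}}$ and $\nu(v_{\sfU\sfV})'(t)$ diverge; only the combinations $\omega g+\gamma_tv$ and $\braket{\omega g,U^-_t}+\gamma_t\braket{v,U_t^-}$ are finite. We handle this by first proving the identity for $v\in\fV_{\sfr\sfe\sfg}\cap\fV_{\sfU\sfV}$, where everything is regular and the computation is the classical one. We then pass to the limit using \cref{prop:U.3}, where the derivative formulas are already written in the renormalized form, and \cref{lem:nu.3}, keeping $\tilde g$ fixed while $v_{\sfU\sfV}$ is approximated.
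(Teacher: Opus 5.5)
Your items (i)--(v) follow the paper's proof. Everything is reduced to $W_t(v)(\gamma)\eps(g)=e^{u_t-\braket{U_t^-,g}}\eps(e^{-t\omega}g-U_t^+)$ and then to \cref{prop:U.1,prop:U.2,prop:phase.4} and \cref{eq:Ftbound}. For (ii) you are more explicit than the paper, and your bound $|u_s(v)|\le 2t\norm{\omega^{-1/2}v_{\sfr\sfe\sfg}}^2+(1+2N_t)\norm{v_{\sfU\sfV}/\omega}^2$, with $N_t$ Poisson, is exactly what is needed.

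The real difference is in (vi). The paper never lets a divergent quantity appear. For $\xi=\eps(h)$ with $\tilde h=h+\gamma_tv_{\sfU\sfV}/\omega\in\sD(\omega)$, it uses \cref{prop:phase.3} and disjointness of supports to see that $u_t(v_{\sfU\sfV})+\gamma_t\braket{U_t^-(v),v_{\sfU\sfV}/\omega}=\int_0^t\braket{U_s^-(v_{\sfU\sfV}),v_{\sfU\sfV}/\omega}\sfd\gamma_s$ is constant between jumps. Hence on a constancy interval of $\gamma$ the exponent is $\braket{e^{-t\omega}g,h}+u_t(v_{\sfr\sfe\sfg})-\braket{g,U_t^+}-\braket{U_t^-,\tilde h}$ plus a constant. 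This is differentiated with \cref{prop:phase.2,prop:U.3} and compared directly with the matrix element of $h(v,\gamma_t)$ obtained from \cref{eq:vHren}. The only approximation is hidden in the scalar identity \cref{prop:phase.3}.

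You instead move the approximation up to the Fock-space derivative identity. That works, but a pointwise identity between derivatives does not pass to the limit by itself, so do it in integrated form. On an interval where $\gamma\equiv x$, set $g_n=-xv_{n,\sfU\sfV}/\omega+\tilde g$ with, e.g., $v_{n,\sfU\sfV}=v_{\sfU\sfV}\chr_{\{\omega\le n\}}$, which keeps supports disjoint. Put $F_n=\braket{\zeta,W_\cdot(v_n)\eps(g_n)}$ and $G_n=-\braket{\zeta,W_\cdot(v_n)h(v_n,x)\eps(g_n)}$; the regular case gives $F_n(t)-F_n(s)=\int_s^tG_n(r)\sfd r$. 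Then $F_n\to F$ and $G_n\to G$ boundedly, by \cref{prop:Wt.cont2}, the explicit bound behind \cref{prop:Wt.bound}, and the norm convergence $h(v_n,x)\eps(g_n)=\big(\ad(\omega\tilde g+xv_{\sfr\sfe\sfg})+c_n\big)\eps(g_n)\to h(v,x)\eps(g)$. Since $G$ is continuous by \cref{prop:Wt.cont}, you get $F'=G$ on the interval. The paper's route is shorter and involves no limit interchange. Yours has the advantage that the algebra is only ever done for regular $v$, where it is classical.

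Two smaller points need fixing.
\begin{enumerate}
\item In (vi), absolute continuity for $\xi\in\cE_{\sfr\sfe\sfn}(\pm1,v)$ does not follow from \cref{prop:U.3} alone, since in general $v_{\sfU\sfV}/\omega\notin\sD(\omega)$. The term $\braket{U_t^-,v_{\sfU\sfV}/\omega}$ needs \cref{lem:nu.3}, which is why the paper cites it. That lemma writes the term as $-\int_0^t\nu(v_{\sfU\sfV})(t-s)\gamma_s\sfd s$, a finite sum of shifts of the absolutely continuous function $t\mapsto\int_0^t\nu(v_{\sfU\sfV})(r)\sfd r$.
\item In (iv), \cref{prop:U.3} only gives weak continuity of $t\mapsto U_t^\pm$. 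Strong continuity of $W_t(v)(\gamma)$ on exponential vectors needs $L^2$-norm continuity. You can get this by dominated convergence pointwise in $k$, exactly as you do at $t=0$, or from the flow equations \cref{prop:U.2}.
\end{enumerate}
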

\begin{rem}
    In the literature \cite{GueneysuMatteMoller.2017,MatteMoller.2018,HinrichsMatte.2022},
    similar differential equations to \subcref{prop:Wt.3} are at the core of the proof of the corresponding FKFs.
    In all the referenced articles, however, it can only be derived in the presence of a suitable ultraviolet regularization,
    which would correspond to the assumption $v\in\fV_{\sfr\sfe\sfg}$.
    We thus stress that our explicit construction of $h(v,x)$ \cref{eq:vHren}
    allows for a direct identification of the self-energy renormalized operator.
\end{rem}
\begin{proof}
    As shown in \cref{prop:U.4,prop:phase.5}, respectively, $U_{t}^{\pm}(v)\circ X$ and $u_{t}(v)\circ X$ are measurable w.r.t. $\Sigma_t$.
    Thus, in view of \cref{prop:U.1}, \cref{eq:Ftbound} and the analyticity of $h\mapsto I_t(\omega,h)$ for $h\in\fV_{\sfr\sfe\sfg}$,  $W_t(v)\circ X$ is a composition of measurable maps, which proves \subcref{prop:Wt.1}.
    Furthermore, \subcref{prop:Wt.bound} and \subcref{prop:Wt.cont2} easily follow from \cref{prop:U.1} (and the linearity of $U_t(v)(\gamma)$ in $v$) as well as \cref{eq:Ftbound,eq:defut}.
    
    For the proof of the remaining statements, first observe that \cref{eq:It}, \cref{eq:Itstar} and the definition \cref{def:Wt} imply
    \begin{align*}
         W_t(v)(\gamma)\eps(h) = e^{u_t-\braket{U_t^-(v)(\gamma),h}}\eps(e^{-t\omega}h-U_t^+(v)(\gamma)),
         \quad h\in L^2(\cM,\mu).
    \end{align*}
    Thus, by \cref{eq:Ftbound,prop:U.1,prop:U.2,prop:phase.4} as well as the density of $\cE(L^2(\cM,\mu))$ in $\FS$, \subcref{prop:Wt.cont} and \subcref{prop:Wt.2} follow immediately.
    Further, taking the scalar product with another exponential vector and using $\braket{\eps(f),\eps(g)}=e^{\braket{f,g}}$, we find
    \begin{align*}
        &
        \braket{\eps(g),W_t(v)(\gamma)\eps(h)} = \exp\big({u_t(v) - \braket{g,U_t^+(v)(\gamma)} + \braket{e^{-t\omega} g - U_t^-(v)(\gamma),h}}\big),
    \end{align*}
    whence the claimed absolute continuity in \subcref{prop:Wt.3} follows from \cref{prop:U.3,lem:nu.3,eq:defut}.
    Furthermore, if $h\in\cE_{\sfr\sfe\sfn}(\gamma_t,v)$, i.e., $\tilde h \coloneqq h+ \gamma_t v_{\sfU\sfV}/\omega \in \sD(\omega)$, by \cref{prop:U.3,prop:phase}, we can differentiate and obtain
    \begin{align*}
        \partial_t\big( & u_t(v) - \braket{g,U_t^+(v)} + \braket{e^{-t\omega} g - U_t^-,h}\!\big)\\
        &= \partial_t\big(\!\braket{e^{-t\omega}g,h} + u_t(v_{\sfr\sfe\sfg})(\gamma) -\braket{g,U_t^+(v)(\gamma)} - \braket{U_t^-(v)(\gamma),\tilde h}\!\big)\\
        &= \braket{-e^{-t\omega}\omega g-\gamma_t v_{\sfr\sfe\sfg},\tilde h} - \gamma_t \braket{e^{-t\omega}g - U_t^-(v)(\gamma),v_{\sfr\sfe\sfg}} - \norm{\omega^{-1/2}v_{\sfr\sfe\sfg}}^2 + \braket{U_t^-(v)(\gamma)-\gamma_tv_{\sfU\sfV}/\omega,\omega\tilde h}.
    \end{align*}
    Directly from the definition \cref{eq:vHren}, arguing similar to the above derivation,
    employing the observation $\braket{\eps(f),\dG(\omega)\eps(\tilde h)} = \braket{f,\omega \tilde h}e^{\braket{f,g}}$ as well as \cref{def:ann},
    and denoting $\tilde g \coloneqq e^{-t\omega}g - U_t^-(v)(\gamma)+\gamma_tv_{\sfU\sfV}/\omega$,
    \begin{align*}
        \tilde u = -\norm{v_{\sfU\sfV}/\omega}^2 - \gamma_t\braket{{e^{-t\omega}g-U_t^{-}(v),v_{\sfU\sfV}/\omega}} - \gamma_t\braket{v_{\sfU\sfV}/\omega,h}+{u_t-\braket{g,U_t^+(v)}}
    \end{align*}
    we now find
    \begin{align*}
         &\braket{\eps(g),W_t(v)(\gamma)h(v,\gamma_t)\eps(h)}
         = e^{\tilde u}\Braket{\eps(\tilde g),\big(\dG(\omega)+\gamma_t\ph(v_{\sfr\sfe\sfg}) + \norm{\omega^{-1/2}v_{\sfr\sfe\sfg}}^2\big)\eps(\tilde h)}
         \\& \qquad = \partial_t\big(  u_t(v) - \braket{g,U_t^+(v)} + \braket{e^{-t\omega} g - U_t^-,h}\!\big) e^{\tilde u}\braket{\eps(\tilde g),\eps(\tilde h)}.
    \end{align*}
    Thus, \subcref{prop:Wt.3} follows from the simple observation
      \begin{align*}
        &e^{\tilde u}\braket{\eps(\tilde g),\eps(\tilde h)}
         = \braket{\sW(\gamma_t v_{\sfU\sfV}/\omega) W_t(v)(\gamma)^*\eps(g),\sW(\gamma_t v_{\sfU\sfV}/\omega)\eps(h)} = \braket{\eps(g),W_t(v)(\gamma)\eps(h)}.\qedhere
    \end{align*}
\end{proof}
This immediately entails that the right hand side of \cref{def:Tt} indeed defines a strongly continuous semigroup.
\begin{cor}
    For $v\in\fV_{\sfr\sfe\fg}+\fV_{\sfU\sfV}$, the map $[0,\infty)\to\cB(L^2(\{\pm1\};\FS))$, $t\mapsto T_t(v)$ given by \cref{def:Tt} defines a strongly continuous semigroup and the map $\fV_{\sfr\sfe\fg}+\fV_{\sfU\sfV}\to\cB(L^2(\{\pm1\};\FS))$, $v\mapsto T_t(v)$ is continuous w.r.t. operator norm.
\end{cor}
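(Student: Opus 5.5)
The plan is to derive every claim of the corollary from \cref{prop:Wt}, together with the Markov property \cref{eq:Markov} of the spin process.

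\emph{Well-definedness and boundedness.} For $\psi\in L^2(\{\pm1\};\FS)$ and $x=\pm1$, the integrand $W_t(v)(X)\psi(X_t)$ is a measurable $\FS$-valued map by \cref{prop:Wt.1}. Since $\psi(X_t)$ takes only the two values $\psi(\pm1)$, its norm is dominated by $\sup_{\|v\|_\omega\le1}\|W_t(v)\|\,(\|\psi(1)\|+\|\psi(-1)\|)$. Using the scaling $W_t(v)$ versus $v/\|v\|_\omega$, or simply replacing the unit ball by a ball of any radius in \cref{prop:Wt.bound} (the same estimate works), this bound is integrable. Hence the Bochner integral in \cref{def:Tt} exists, and $\|T_t(v)\|\le 2\max_x\EE^x[\sup_{s\le t}\|W_s(v)\|]$, which is locally bounded in $t$.

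\emph{Semigroup property.} I will use \cref{prop:Wt.2}: $W_{t+s}=W_t\,(W_s\circ\tau_t)$ and $X_{t+s}=X_s\circ\tau_t$. Conditioning on $\Sigma_t$, $W_t(v)(X)$ is $\Sigma_t$-measurable by \cref{prop:Wt.1}, so it can be pulled out of the conditional expectation. The Markov property \cref{eq:Markov}, extended from $X_{t+s}$ to bounded, resp. integrable, functionals of the shifted path (standard for this chain, first via cylinder functionals and then by a monotone class argument), gives
\[
\EE^x\big[(W_s(v)\psi(X_s))\circ\tau_t\,\big|\,\Sigma_t\big]=(T_s(v)\psi)(X_t).
\]
Hence $T_{t+s}(v)\psi(x)=\EE^x[W_t(v)(T_s(v)\psi)(X_t)]=(T_tT_s\psi)(x)$. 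This requires Bochner-integrable conditional expectations of $\cB(\FS)$-valued times $\FS$-valued products. To be safe, I will carry out the argument for weak matrix elements $\braket{\phi,\cdot}$, with $\phi$ in a countable dense set, which reduces it to scalar processes. Moreover $T_0=\Id$ because $W_0=\Id$.

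\emph{Strong continuity.} As $t\downarrow0$, $W_t(v)(\gamma)\psi(\gamma_t)\to\psi(\gamma_0)$ for every $\gamma$, by \cref{prop:Wt.cont} and right-continuity of $\gamma$ (eventually $\gamma_t=\gamma_0$). The domination from \cref{prop:Wt.bound} and dominated convergence then give $T_t\psi\to\psi$. Continuity at other times follows from the semigroup law and local boundedness.

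\emph{Continuity in $v$.} Take $v_n\to v$ in $\|\cdot\|_\omega$. Then $\|T_t(v_n)-T_t(v)\|\le 2\max_x\EE^x\|W_t(v_n)-W_t(v)\|$. The integrand tends to $0$ pointwise by \cref{prop:Wt.cont2}, and it is dominated by the integrable supremum over a ball containing all $v_n$ (\cref{prop:Wt.bound}), so dominated convergence yields norm convergence.

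The only genuinely delicate step is the Markov/conditioning argument for operator-valued functionals. I expect it to be handled by reduction to scalar matrix elements, together with separability of $\FS$.
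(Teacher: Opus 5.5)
Your proposal is correct and follows the paper's proof essentially step by step. Boundedness comes from \cref{prop:Wt.bound}, the semigroup law from the cocycle identity \cref{prop:Wt.2} together with the Markov property, strong continuity from right-continuity of the paths, \cref{prop:Wt.cont} and dominated convergence, and norm continuity in $v$ from \cref{prop:Wt.cont2} and \cref{prop:Wt.bound}. Your extra care is welcome but does not change the argument: conditioning the operator-valued functionals via scalar matrix elements, and extending the uniform bound from the unit ball to arbitrary balls, only fill in details the paper leaves implicit.
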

\begin{proof}
    We first observe that
    \begin{align*}
        T_t(v)\psi(x) = \EE^x\big[ W_t(v)(X)\psi(X_t)\big]
    \end{align*}
    indeed defines a bounded operator on $L^2(\{\pm1\};\FS)$,
    by \cref{prop:Wt.bound}.
    The semigroup properties follows from \cref{prop:Wt.2,prop:Wt.3}
    and the Markov property of $X$, by the direct observation
     \begin{align*}
     T_t(v)T_s(v)f(x) & = \int_{\Omega}\int_\Omega W_t(v)(\gamma)W_s(v)(\sigma)f(\sigma_s)\sfd\PP^{\gamma_t}(\sigma)\sfd\PP^x(\gamma)\\
     & = \EE^x\big[W_t(v)(X)\EE[ W_s(v)(\tau_t(X))f(X_{t+s} |\Sigma_t]\big]\\
     & = \EE^x\big[W_t(v)(X)W_s(v)(X_{t+s})f(X_{t+s})\big]\\
     & = T_{t+s}(v)f(x).
    \end{align*}
    The strong continuity follows from the fact that the paths of $\gamma$ are right-continuous,
    the continuity of $t\mapsto W_t(v)(\gamma)$ in the strong operator topology \cref{prop:Wt.cont}, the bound \cref{prop:Wt.bound} and the dominated convergence theorem.
    Finally, the norm continuity of $v\mapsto T_t(v)$ follows from \cref{prop:Wt.cont2,prop:Wt.bound} and the dominated convergence theorem.
\end{proof}
We move to the
\begin{proof}[\textbf{Proof of \cref{thm:FKF}}]
    It only remains to prove that $H_{\sfr\sfe\sfn}(v)$ is the generator of $T_t(v)$, which follows if we can prove
    \begin{align}
        \braket{\Phi,T_t(v)\Psi} - \braket{\Phi,\Psi} = -\int_0^t\braket{\Phi,T_s(v)H_{\sfr\sfe\sfn}(v)\Psi}\sfd s, \quad \Phi,\Psi\in \bigoplus_{x=\pm 1} \cE_{\sfr\sfe\fn}(x,v),
    \end{align}
    by recalling the the chosen test function space is a core for $H_{\sfr\sfe\sfn}(v)$.
    To this end, the absolute continuity \cref{prop:Wt.3}, we find 
    \begin{align*}
        &\braket{\Phi(x),W_t(v)(X)\Psi(X_t)}_{\FS} - \braket{\Phi(x),\Psi(X_0)}
        \\&= -\int_0^t \braket{\Phi(x),W_s(v)(X)h(v,\gamma_s)\Psi(X_s)}\sfd s + \int_0^t \braket{\Phi(x),W_s(v)(X)\sfd \Psi(X_s)},
    \end{align*}
    where the last term on the right hand side can be interpreted as a Riemann--Stieltjes integral or (equivalently) a sum over the jumps of $X_s$.
    Taking the expectation w.r.t. $\PP^x$ and applying \cref{lem:stieltjesexp}, we find
    \begin{align*}
        \EE^x[\braket{\Phi(x),W_t(v)(X)\Psi(X_t)}_{\FS}] - \braket{\Phi(x),\Psi(x)} = -\int_0^t \EE^x[\braket{\Phi(x),W_s(v)\big(H_{\sfr\sfe\sfn}(v)\Psi\big)(X_s)}]\sfd s
    \end{align*}
    Thus, summing over $x=\pm 1$ and comparing with the definition \cref{def:Tt} proves the statement.
\end{proof}
 
\section{Application: Ground State Properties}
\label{sec:app}
We now apply \cref{thm:FKF}, to study the ground state regime of $H_{\sfr\sfe\sfn}(v)$.

\subsection{Ergodicity in the \texorpdfstring{$\sQ$}{Q}-space representation}
The $\sQ$-space is a natural $L^2$-representation of the Fock space $\FS$, i.e.,
$(\sQ,\fq)$ is a probability space such that $\FS$ and $L^2(\sQ,\fq)$ are unitarily equivalent.
For $v\in\fV_{\sfr\sfe\sfg}$, it is well-known (and follows, e.g., from a simple perturbative argument)
that, if $(\sQ,\fq)$ is appropriately chosen, $e^{-t H_{\sfS\sfB}(v)}$ improves positivitiy (or equivalently, ergodic) on $L^2(\sQ,\fq)$, i.e.,
$e^{-t H_{\sfS\sfB}(v)}f$ is strictly positive almost everywhere for $t>0$ and $0\le f\in L^2(\sQ,\fq)\setminus\{0\}$.
However, since \cref{thm:SEren} involves a limiting process,
this property does not immediately carry over to $e^{-t H_{\sfr\sfe\sfn}(v)}$ for any $v\in\fV_{\sfr\sfe\sfn}+\fV_{\sfU\sfV}$.
We here demonstrate that it is a direct corollary of the Feynman--Kac formula \cref{thm:FKF}.

More detailed introductions of the $\sQ$-space can, e.g., be found in \cite{Simon.1974,HiroshimaLorinczi.2020}
or also see \cite[\S~5.1]{HinrichsPolzer.2025} for an explicit overview for the spin boson model.
For our purposes, it suffices to recall that we can choose the unitary $\Theta:\FS\to L^2(\cQ,\fq)$ such that
\begin{enumerate}
    \item the vacuum $\eps(0)$ maps to the constant function one, i.e., $\Theta\eps(0)=1$;
    \item $\Theta\ph(f)\Theta^*$ is a multiplication operator for any real-valued $f$;
    \item $\Theta e^{-t\dG(\omega)}\Theta^*$ improves positivity for any $t>0$.
\end{enumerate}
In \cite[Prop.~8.1]{MatteMoller.2018}, the authors proved that $\Theta I_t(\omega,f)\Theta^*$ and
$\Theta I_t(\omega,f)^*\Theta^*$ improve positivity for any real-valued $f\in L^2(\cM,\mu)$.
Thus, the following result immediately follows from the fact that $u_t(v)(\gamma)$ is real-valued,
by definition \cref{eq:defut} and that $U_t^\pm(v)(\gamma)$ is real-valued by definition \cref{def:Ut}, if $v$ is.
\begin{prop}
    \label{prop:erg}
    If $v\in\fV_{\sfr\sfe\sfg}+\fV_{\sfU\sfV}$ is real-valued, then $\Theta T_t(v)\Theta^*$ improves positivity on $L^2(\{\pm1\}\times\sQ,2^{-1/2}\delta\otimes\fq)$.
\end{prop}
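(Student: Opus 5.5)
The plan is to work directly with the Feynman--Kac representation of \cref{thm:FKF} and prove two facts: $\Theta T_t(v)\Theta^*$ is positivity preserving, and it is strictly positive on every nonnegative nonzero input.

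Write $\tilde T_t=\Theta T_t(v)\Theta^*$ and $\tilde W_t(\gamma)=\Theta W_t(v)(\gamma)\Theta^*$. Fix $0\le F\in L^2(\{\pm1\}\times\sQ)$ with $F\neq0$, and choose $y\in\{\pm1\}$ with $F(y,\cdot)\ge0$ and $F(y,\cdot)\neq0$. For a fixed path $\gamma$, the definition \cref{def:Wt} shows that $\tilde W_t(\gamma)$ is the product of
\begin{itemize}
    \item the strictly positive scalar $e^{u_t(v)(\gamma)}$,
    \item $\Theta I_{t/2}(\omega,U^+_t)^*\Theta^*$,
    \item $\Theta I_{t/2}(\omega,U^-_t)\Theta^*$.
\end{itemize}
Here $u_t$ is real by \cref{eq:defut}. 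The functions $U^\pm_t(v)(\gamma)$ are real-valued by \cref{def:Ut}, and they lie in $\fV_{\sfr\sfe\sfg}$ by \cref{prop:U.1}. By \cite[Prop.~8.1]{MatteMoller.2018} both $I$-factors improve positivity. (If that proposition is stated only for $f\in L^2$, I would extend it to $f\in\sD(\omega^{-1/2})$ by norm-approximation using \cref{eq:Ftbound}. This gives positivity preservation, and strict improvement survives because the factor $e^{-t\dG(\omega)/2}$ inside each $I$ already improves positivity.) Hence, for every path $\gamma$ and every $0\le g\neq 0$, the function $\tilde W_t(\gamma)g$ is strictly positive $\fq$-a.e.

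Next I use the formula
\[
(\tilde T_tF)(x)=\EE^x\big[\tilde W_t(X)F(X_t)\big].
\]
The integrand is nonnegative for every path, since $F(X_t,\cdot)\ge0$ and $\tilde W_t(\gamma)$ preserves positivity. So $\tilde T_tF\ge0$.

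For strict positivity at a fixed $x$, I restrict to the event $A=\{X_t=y\}$. Since $p_t(x,y)>0$ by \cref{eq:transition}, $\PP^x(A)>0$. On $A$, the integrand $\tilde W_t(X)F(y,\cdot)$ is strictly positive a.e. in $\sQ$. It remains to pass from this pathwise statement to the expectation. Pair with an arbitrary $0\le \phi\neq0$:
\[
\braket{\phi,(\tilde T_tF)(x)}\ \ge\ \EE^x\big[\chr_A\braket{\phi,\tilde W_t(X)F(y)}\big].
\]
On $A$ the inner product $\braket{\phi,\tilde W_t(X)F(y)}$ is strictly positive, so the right-hand side is $>0$. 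Exchanging the expectation with the pairing is justified by the integrability in \cref{prop:Wt.bound} and the measurability in \cref{prop:Wt.1}, which together make the expectation a Bochner integral.

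A function $h\ge0$ with $\braket{\phi,h}>0$ for every $0\le\phi\neq0$ is strictly positive a.e.: otherwise take $\phi=\chr_{\{h=0\}}$. Therefore $(\tilde T_tF)(x)>0$ $\fq$-a.e. for each $x=\pm1$, which is the claim.

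The main obstacle is the first step: making sure that \cite[Prop.~8.1]{MatteMoller.2018} applies to the singular-type arguments $U^\pm_t\in\sD(\omega^{-1/2})$ rather than to $L^2$ functions, and that the adjoint factor also improves positivity. If the cited result only gives positivity preservation in that generality, I would fall back on the factorization $I_{t/2}(\omega,f)=e^{-t\dG(\omega)/4}I_{t/4}(\omega,e^{-t\omega/4}f)$, which follows from \cref{eq:It}. This factorization isolates the positivity-improving semigroup, and the remaining factor is positivity preserving, as an approximation by $L^2$ functions shows.
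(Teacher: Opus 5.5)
Your proof is correct and follows the paper's argument. Like the paper, you combine the Feynman--Kac formula of \cref{thm:FKF}, the real-valuedness of $u_t(v)$ and $U^\pm_t(v)$, and the positivity improvement of the $I$-factors from \cite[Prop.~8.1]{MatteMoller.2018}; your restriction to the event $\{X_t=y\}$ with $p_t(x,y)>0$ makes rigorous the paper's one-line ``average over a non-zero non-negative function'' step. Your worry about $L^2$ is moot, since $U^\pm_t(v)(\gamma)\in\fV_{\sfr\sfe\sfg}\subset L^2(\cM,\mu)$ by \cref{prop:U.1}, so the fallback is unnecessary (and its factorization should read $I_{t/2}(\omega,f)=e^{-t\dG(\omega)/4}I_{t/4}(\omega,f)$, without the extra $e^{-t\omega/4}$ on $f$).
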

\begin{proof}
    If $x=\pm 1$, $q\in\sQ$, then for any $\Psi\in L^2(\{\pm1\};\FS)$ \cref{thm:FKF} implies
    \begin{align*}
         \Big(\Theta T_t(v)\Psi(x)\Big)(q) = \EE^x\Big[e^{u_t(v)(X)}U\big(I_{t/2}(\omega,U_t^+(v)(X))^*I_{t/2}(\omega,U_t^-(v)(X))\Psi(X_t)\big)(q)\Big].
    \end{align*}
    Hence, if $\Theta\Psi$ is non-negative and non-zero, then the right hand side is an average over a non-zero non-negative function, by the above considerations.
    This proves the claim.
\end{proof}
\begin{rem}
    \label{rem:realv}
    The assumption that $v$ be real-valued is in fact only a technical restriction on the choice of positive cone,
    as has, e.g., been observed in \cite{Moller.2005,HinrichsPolzer.2025}. In fact, if $v$ is not real-valued,
    then under the unitary transformation $\Gamma(\chr_{v\ne 0}\overline{v}/|v| + \chr_{v=0})$
    (with the argument interpreted as multiplication operators), the operator $H_{\sfr\sfe\sfn}(v)$ transforms to
    $H_{\sfr\sfe\sfn}(|v|)$, whence all corollaries independent of the concrete positive cone remain valid.
\end{rem}
\begin{rem}
    As observed in the beginning of this section,
    proving that an operator obtained by a limiting procedure
    is itself ergodic is not straightforward and
    requires additional information on the structure of the operator itself.
    To our knowledge, in the case of the Nelson model, ergodicity for the
    model without cutoff albeit expected to hold was first proven in \cite{MatteMoller.2018},
    also employing a Feynman--Kac formula. Analogs for the two-dimensional
    relativistic Nelson model were derived in \cite{Sloan.1974b,HinrichsMatte.2022}.
    We also note that ergodicity w.r.t. the so-called Fr\"ohlich cone for translation-invariant models
    with fixed total momentum was expected independent of ultraviolet cutoffs \cite{Frohlich.1973,Frohlich.1974},
    but only in the recent years proved using various technical approaches \cite{Miyao.2019,Lampart.2020,HinrichsHiroshima.2024}.
\end{rem}
By the Perron--Frobenius--Faris theorem, cf. \cite{Faris.1972} or \cite[\S~XIII.12]{ReedSimon.1978},
we now have the following immediate consequence.
\begin{cor}
    \label{cor:PFF}
    If $H_{\sfr\sfe\sfn}(v)$ has a ground state $\psi$, i.e., $H_{\sfr\sfe\sfn}(v)\psi = E(v)\psi$ for $E(v)=\inf\sigma(H_{\sfr\sfe\sfn}(v))$, then it is non-degenerate, i.e., $\ker(H_{\sfr\sfe\sfn}(v)-E)=\Span\{\psi\}$. Furthermore, setting $\Omega_\downarrow(x)=\eps(0)$ for $x=\pm1$, we have $\braket{\psi,\Omega_\downarrow}\ne 0$.
\end{cor}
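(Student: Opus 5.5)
We will deduce the statement from the Perron--Frobenius--Faris theorem, applied to the semigroup $\Theta T_t(v)\Theta^*$. By \cref{prop:erg} this semigroup improves positivity, at least when $v$ is real-valued.

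\textbf{Reduction to real $v$.} By \cref{rem:realv}, the operator $H_{\sfr\sfe\sfn}(v)$ is unitarily equivalent to $H_{\sfr\sfe\sfn}(|v|)$ via $\Id\otimes\Gamma(\cdot)$, where $\Gamma(\cdot)$ is the second quantization of a multiplication by phases. This unitary leaves $\eps(0)$ invariant, hence it fixes $\Omega_\downarrow$. Both the non-degeneracy of the ground state and the property $\braket{\psi,\Omega_\downarrow}\ne0$ are invariant under such a unitary. So it suffices to treat real-valued $v$.

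\textbf{Applying the abstract theorem.} Let $v$ be real-valued and set $\tilde\Theta \coloneqq \Id_{L^2(\{\pm1\})}\otimes\Theta$. By \cref{thm:FKF} we have $e^{-tH_{\sfr\sfe\sfn}(v)}=T_t(v)$, and by \cref{prop:erg} the operator $\tilde\Theta e^{-tH_{\sfr\sfe\sfn}(v)}\tilde\Theta^*$ improves positivity on $L^2(\{\pm1\}\times\sQ)$ for every $t>0$. The self-adjoint operator $H_{\sfr\sfe\sfn}(v)$ is bounded below. Faris' version of the Perron--Frobenius theorem (\cite[Thm.~XIII.44]{ReedSimon.1978}) then gives two conclusions. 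First, if $E(v)$ is an eigenvalue, it is simple. Second, the corresponding eigenvector can be chosen strictly positive almost everywhere, i.e.\ $\tilde\Theta\psi>0$ a.e.

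\textbf{Overlap with $\Omega_\downarrow$.} Since $\Theta\eps(0)=1$, the function $\tilde\Theta\Omega_\downarrow$ is the constant $1$ on $\{\pm1\}\times\sQ$. This function is non-negative and nonzero. Therefore
\[
\braket{\psi,\Omega_\downarrow}=\braket{\tilde\Theta\psi,1}=\sum_{x=\pm1}\int_\sQ (\tilde\Theta\psi)(x,q)\,\sfd\fq(q)>0
\]
(up to the normalisation of the counting measure). The integrand is strictly positive a.e., so this overlap is strictly positive, in particular nonzero. For general complex-valued $v$, the argument yields a nonzero overlap after pulling back through the unitary of \cref{rem:realv}.

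\textbf{Main obstacle.} The only real point to check is that \cref{prop:erg} genuinely establishes \emph{improving} positivity and not merely positivity preservation. This requires the integrand in \cref{def:Tt} to be positivity improving for $\PP^x$-almost every path, with an average over a set of paths of positive measure. Here we rely on the positivity improvement of $I_{t/2}(\omega,\cdot)$ and its adjoint from \cite[Prop.~8.1]{MatteMoller.2018}. There is a subtlety in the spin coordinate: reaching $X_t=-x$ from $x$ requires paths with jumps. These have positive probability by \cref{eq:transition}, since $p_t(x,-x)>0$ for all $t>0$. Hence $\tilde\Theta T_t(v)\tilde\Theta^*$ maps non-negative nonzero functions to functions that are strictly positive in both spin components.
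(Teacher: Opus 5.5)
Your proposal is correct and follows the paper's proof. It uses the same three ingredients: the Perron--Frobenius--Faris theorem applied to the positivity improving semigroup of \cref{prop:erg} for real-valued $v$, the overlap argument via $\Theta\eps(0)=1$, and the reduction of complex $v$ to $|v|$ through the vacuum-preserving unitary of \cref{rem:realv}. Your extra check that $p_t(x,\pm x)>0$ makes both spin components strictly positive is a useful detail that the paper's proof of \cref{prop:erg} leaves implicit.
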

\begin{proof}
    If $v\in\fV_{\sfr\sfe\sfg}+\fV_{\sfU\sfV}$ is real-valued, this is a standard corollary of the Perron--Frobenius--Faris theorem, recalling that $\Theta\Omega_\downarrow(x,q)=1$ for $x=\pm 1$, $q\in\sQ$.
    In the general case, by \cref{rem:realv} and the fact that the unitary therein leaves $\Omega$ invariant, it then follows from the unitary equivalence of $H_{\sfr\sfe\sfn}(v)$ to $H_{\sfr\sfe\sfn}(|v|)$.
\end{proof}
\subsection{Existence of the Ground State}
\Cref{cor:PFF} from the previous section also allows us to study the existence of ground states,
by combining that ergodicity of $H_{\sfr\sfe\sfn}(v)$ and positivity of $\Omega_{\downarrow}$ yield
\begin{align}
    \label{eq:Wienertype}
    \|\chr_{E(v)}(H_{\sfr\sfe\sfn}(v))\Omega_{\downarrow}\|^2 = \lim_{t\to\infty} \frac{\braket{\Omega_\downarrow,e^{-tH_{\sfr\sfe\sfn}(v)}\Omega_{\downarrow}}^2}{\braket{\Omega_\downarrow,e^{-2tH_{\sfr\sfe\sfn}(v)}\Omega_{\downarrow}}},
\end{align}
see for example \cite{HirokawaHiroshimaLorinczi.2014,HiroshimaLorinczi.2020,HinrichsPolzer.2025},
with the Feynman--Kac formula \cref{thm:FKF} applied to $\Omega_\downarrow$ yielding
\begin{align}
    \label{eq:FKFvac}
    \braket{\Omega_\downarrow,e^{-tH_{\sfr\sfe\sfn}(v)}\Omega_{\downarrow}} = \frac12\sum_{x=\pm 1}\EE^x[e^{u_t(v)(X)}].
\end{align}
This allows for the following simple
\begin{proof}[\textbf{Proof of \cref{thm:GS}}]
    By the assumption $v\in\fV_{\sfU\sfV}$ and \cref{prop:U.1,prop:phase.4}, we have the simple estimate
    \begin{align*}
        u_{2t}(v)(\gamma) = u_t(v)(\gamma)+u_t(v)(\tau_t(\gamma)) + \braket{U_t^-(v)(\gamma),U_t^+(v)(\tau_t(\gamma))} \le u_t(v)(\gamma)+u_t(v)(\tau_t(\gamma)) + \norm{v/\omega}^2.
    \end{align*}
    Noting that $u_t$ is independent of the starting value of $\gamma$, i.e., $u_t(v)(\gamma)=u_t(v)(-\gamma)$,
    and once more employing the Markov property of $X$ as well as the tower property of conditional expectation and \cref{prop:phase.5}, we find for $x=\pm 1$
    \begin{align*}
        e^{-\|v/\omega\|^2}\EE^x[e^{u_{2t}(v)(X)}] &\le \EE^x[e^{u_t(v)(X)+u_t(v)(\tau_t(X))}] = \EE^x[\EE^x[e^{u_t(v)(X)+u_t(v)(\tau_t(X))}|\Sigma_t]] \\&\quad = \EE^x[e^{u_t(v)(X)}\EE^{X_t}[e^{u_t(v)(X)]}] = \EE^x[e^{u_t(v)(X)}]^2.
    \end{align*}
    Combining this observation with \cref{eq:Wienertype,eq:FKFvac}, we find
    \begin{align}
    \label{eq:GSoverlap}
        \|\chr_{E(v)}(H_{\sfr\sfe\sfn}(v))\Omega_{\downarrow}\|^2\ge e^{-\|v/\omega\|^2}
    \end{align}
    and thus by \cref{cor:PFF} $H_{\sfr\sfe\sfn}(v)$ has a unique ground state.
\end{proof}
\begin{rem}
    Proofs for the existence of ground states of the infrared regular spin boson model
    in presence of an ultraviolet cutoff were previously given in \cite{HirokawaHiroshimaLorinczi.2014,HinrichsPolzer.2025}.
    Especially, in the second referenced article,
    also the estimate \cref{eq:GSoverlap} for the ground state vacuum overlap is derived.
    Our derivation allows for a simple probabilistic treatment of the problem,
    reducing the required ingredients to the shift relation \cref{prop:phase.4}
    involving a uniformly bounded error term (coming from the infrared regularity),
    the Markov property of the underlying process, and the independence of $u_t(v)(\gamma)$
    of the starting point $\gamma(0)$. We thus assume that this simple method
    allows for applications to further models from non-relativistic quantum field theory.
\end{rem}
%


\bibliographystyle{halpha-abbrv}
\bibliography{../../Literature/00lit}

\end{document}